\newcommand{\N}{\ensuremath{\mathbb{N}}\xspace}
\newcommand{\Z}{\ensuremath{\mathbb{Z}}\xspace}
\newcommand{\R}{\ensuremath{\mathbb{R}}\xspace}
\newcommand{\structure}[1]{\ensuremath{\left\langle#1\right\rangle}\xspace}
\newcommand{\ie}{\emph{i.e.}\@\xspace}
\newcommand{\set}[1]{\ensuremath{\left\{#1\right\}}}
\newcommand{\setofstates}{\ensuremath{S}\xspace}
\newcommand{\lattice}{\ensuremath{L}\xspace}
\newcommand{\neigh}{\ensuremath{N}\xspace}
\newcommand{\locrule}{\ensuremath{\delta}\xspace}
\newcommand{\glorule}[1][\@nil]{%
\ensuremath{\def\tmp{#1}%
\ifx\tmp\@nnil{G_\locrule}\else{G_{#1}}%
\fi}\xspace}
\newcommand{\MSL}{\ensuremath{\mathcal{M}(\setofstates)^\lattice}\xspace}
\begin{document}
\title{Cellular Automata on Probability Measures}
%
%\titlerunning{Abbreviated paper title}
% If the paper title is too long for the running head, you can set an abbreviated paper title here
%
\author{Enrico Formenti \inst{1} \and
Faizal Hafiz\inst{2} \and
Amelia Kunze\inst{1} \and
Davide La Torre\inst{2}}
\authorrunning{E. Formenti et al.}
% First names are abbreviated in the running head.
% If there are more than two authors, 'et al.' is used.

\institute{Universit\'{e} C\^{o}te d'Azur, CNRS, i3S, Nice, France %\\
%\email{enrico.formenti@univ-cotedazur.fr}\\
%\email{amelia.kunze@etu.univ-cotedazur.fr}
\and
SKEMA Business School, Sophia Antipolis, France}
\maketitle              % typeset the header of the contribution
\begin{abstract}
Classical Cellular Automata (CCAs) are a powerful computational framework widely used to model complex systems driven by local interactions. Their simplicity lies in the use of a finite set of states and a uniform local rule, yet this simplicity leads to rich and diverse dynamical behaviors. CCAs have found applications in numerous scientific fields, including quantum computing, biology, social sciences, and cryptography. However, traditional CCAs assume complete certainty in the state of all cells, which limits their ability to model systems with inherent uncertainty. This paper introduces a novel generalization of CCAs, termed Cellular Automata on Measures (CAMs), which extends the classical framework to incorporate probabilistic uncertainty. In this setting, the state of each cell is described by a probability measure, and the local rule operates on configurations of such measures. This generalization encompasses the traditional Bernoulli measure framework of CCAs and enables the study of more complex systems, including those with spatially varying probabilities. We provide a rigorous mathematical foundation for CAMs, demonstrate their applicability through concrete examples, and explore their potential to model the dynamics of random graphs. Additionally, we establish connections between CAMs and symbolic dynamics, presenting new avenues for research in random graph theory.  This study lays the groundwork for future exploration of CAMs, offering a flexible and robust framework for modeling uncertainty in cellular automata and opening new directions for both theoretical analysis 
and practical applications.

\keywords{Cellular Automata  \and Probability Measures \and Random Graphs}
\end{abstract}

\section{Introduction}

Classical Cellular Automata (hereafter referred to as CCAs) are extensively used as formal tools to model various complex phenomena arising from local interactions and a finite number of states. CCAs find applications across numerous scientific domains, including quantum computing, biology, social sciences, and more~\cite{arrighi2019,VANSCOY201758,SIPAHI2018151,ZHAO2021126049}. In computer science, they are particularly valuable in cryptography, where they are employed for tasks such as pseudo-random number generation, secret sharing schemes, and the design of S-boxes~\cite{FormentiIMY14,MariotGFL20,PicekM0JM17}.

Informally, a CCA consists of an infinite collection of automata (referred to as cells), each of which assumes a state (or a value from the alphabet) drawn from a finite set (the set of states). These cells are arranged on a regular grid (the lattice), typically indexed by $\mathbb{Z}^d$, where $d$ denotes the dimensionality of the CCA. Each cell updates its state synchronously based on its own state and the states of neighboring cells (its neighborhood), using a uniform local rule that applies to all cells. The overall state of the automata at a given moment is referred to as a (instantaneous) configuration.

Despite the simplicity of the CCA model, it exhibits a remarkable diversity of dynamical behaviors. To study these behaviors, the set of configurations is equipped with the product topology (as detailed in the next section). This area remains an active field of research, leading to numerous publications in international conferences and journals.

Over time, researchers have proposed several variants of CCAs, some of which have inspired this work. For example, Cattaneo et al. introduced a model of cellular automata within fuzzy frameworks to better analyze the chaotic behavior of CCAs~\cite{CATTANEO1997105}. This foundational work has spurred a series of follow-up studies with practical applications, such as modeling crowd dynamics~\cite{GERAKAKIS2019125}.

In~\cite{CATTANEO1997105}, the authors justify their approach by noting that, in a configuration, only a finite subset of cells has a known state, while the states of the other cells are subject to uncertainty. Our paper proposes an alternative method for incorporating uncertainty into the local state. Specifically, we assume that the state of each cell is described by a probability measure which may vary from cell to cell. By imposing additional constraints on the structure of the local rule, we generalize CCAs to work with configurations of probability measures on the set of states. We term this new model Cellular Automata on Measures (CAMs).

This approach generalizes the conventional scenario in CCAs, where each cell is associated with the same Bernoulli measure. This natural extension has already led to several intriguing results; we refer to~\cite{Pivato12} for a survey of related findings. Our paper aims to provide a foundation for further exploration of these ideas. To highlight the breadth of this new model, we demonstrate how CCAs can be adapted to analyze the dynamics of random graphs, thereby establishing a connection between random graphs and symbolic dynamics.

\section{Classical Cellular Automata: Definitions and Main Properties}

Formally, a \textit{classical cellular automaton} (CCA or simply CA) is a structure \structure{d,r,S,\delta} where $d$ is the \emph{dimension}, \ie, the dimension
of the regular lattice \lattice (in this work we set either $\lattice=\N^d$ or $\lattice=\Z^d$); $r$ is the \emph{radius}; \setofstates is the finite set of \emph{states}; 
%$\neigh_r\subset\lattice$ is the closed  \emph{neighborhood}; 
and  $\locrule\colon\setofstates^{(2r+1)^d}\to\setofstates$
is the \emph{local rule}. 
%($\setofstates^{\neigh}=\{l:N\to S\}$).
We denote by $N_r(i)$ the closed ball of center $i\in\lattice$ and radius $r\in\N$ according to the standard Manhattan distance on \lattice. 
%and we call it the \emph{neighborhood} of the cell $i$ (on \lattice). 
We call $N_r(i)$ the \emph{neighborhood} of the cell $i$ (on \lattice).
Any element of $x\in S^\lattice$ is called a \emph{configuration} and it
represents the overall state of the CA at a given generic time step.
For any $i\in\lattice$, denote by $x_i=x(i)$ the element of $x$ contained in the cell $i\in L$. This notation can be extended to (ordered) sequences of elements of \lattice in the obvious way. In particular, $x_{\neigh_r(i)}=x(N_r(i))$ denotes the content of the neighborhood of cell
$i$ for the configuration $x$.
Any local rule \locrule induces a \emph{global rule} \glorule as follows
\[
\glorule(x)_i=\locrule\left(x_{\neigh_r(i)}\right), \ \ \
\forall x\in\setofstates^\lattice, \  \forall i\in\lattice.
\]
Hence, the global rule illustrates the overall
evolution of current configuration of the cellular automaton after one
unity of time. 

In order to study CCA as
dynamical systems, the set of
states \setofstates is equipped with
the topology induced by the discrete metric $d_D$ defined
as follows

\begin{equation*}
d_D(s,t) =
\begin{cases}
     1 & s\ne t \\
     0 & \text{otherwise}
\end{cases}
\end{equation*}
for all $s,t\in\setofstates$. Then, 
$\setofstates^\lattice$ is endowed with \emph{Cantor topology} \ie
the standard product
topology induced by the discrete topology on \setofstates.
Consider the
\emph{Cantor distance} defined
as:
\begin{equation*}
d_C(x,y)=
\sum_{k=-\infty}^{+\infty} \frac{d_D(x_k,y_k)}{s^{|k|}}   
\end{equation*}
where $s$ is the size of \setofstates
(recall that \setofstates is a finite set here).
It is well-known that the
Cantor distance induces the
Cantor topology on $\setofstates^\lattice$.

%\begin{example}
%Let us recall that the basic definition of CA consists in a discrete alphabet set $S=\{0,1\}$, the set of all possible one-dimensional configurations $S^\Z$, and a transition map $f:S^\Z\to S^\Z$. The topology on $S$ is the usual discrete topology induced by the metric for any $t,s\in S$:

%\begin{equation}
%d_T(t,s) = \left\{\begin{array}{cc}
%     0 & t=s \\
%     1 & t\not=s
%\end{array}\right.
%\end{equation}

%The metric on $S^\Z$ which induces to product topology on $S^\Z$ is defined as for any $x,y\in S^\Z$:

%\begin{equation}
%d(x,y) = \sum_{i=-\infty}^{+\infty} \frac{d_T(x_i,y_i)}{2^{|i|}}   %  
%\end{equation}

%where for simplicity we denote $x_i$ the element in the $i^{th}$ position in the configuration $x\in S^\Z$ (similarly for $y_i$). 
%\end{example}

For a fixed $j\in \lattice$, the \emph{shift map} $\sigma_j$ 
is a very well-known and widely studied CA defined as follows 
\[
%\sigma(x_{N_r(i)})=x_{i+j}, \ \ \forall %x\in\setofstates^\lattice,\ \forall i\in\lattice.
\sigma_j(x)_i=x_{i+j}, \ \ \forall x\in\setofstates^\lattice,\ \forall i\in\lattice.
\]
%where $S_{i+j}$ is the content of the cell $i+j\in\lattice$
%for some fixed $j\in N_r(i)$ and $r\in\N$.
%
The famous Curtis-Hedlund-Lyndon theorem established that CCAs are exactly those continuous maps commuting with the shift and
can be formally stated as follows.
\begin{theorem}\cite[Th. 3.4 p. 324]{hedlund69}
    The class of CCAs are exactly the class of continuous mappings $f\colon\setofstates^\lattice\to\setofstates^\lattice$ such that for all $j\in\N$, $f\circ\sigma^j=\sigma^j\circ f$.
\end{theorem}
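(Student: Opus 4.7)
The plan is to prove both inclusions. The forward direction is routine and follows by unwinding the definitions. Given a CCA with local rule $\locrule$ of radius $r$ and global rule $\glorule$, shift-commutativity reduces to the observation that the same local rule is applied at every cell:
\[
(\glorule \circ \sigma_j)(x)_i \;=\; \locrule\bigl((\sigma_j(x))_{\neigh_r(i)}\bigr) \;=\; \locrule\bigl(x_{\neigh_r(i+j)}\bigr) \;=\; (\sigma_j \circ \glorule)(x)_i.
\]
Continuity is the standard cylinder argument: if $x,y$ coincide on $\neigh_R(0)$, then $\glorule(x)$ and $\glorule(y)$ must coincide on $\neigh_{R-r}(0)$ because $\locrule$ only inspects a radius-$r$ window, and consequently $d_C(\glorule(x),\glorule(y))\to 0$ as $R\to\infty$.

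For the converse, let $f\colon\setofstates^\lattice\to\setofstates^\lattice$ be continuous and shift-commuting. The idea is to recover a local rule by examining what $f$ does at cell $0$ and then transporting that description to the other cells via shift-invariance. I would first define $g\colon\setofstates^\lattice\to\setofstates$ by $g(x)=f(x)_0$, which is continuous as the composition of $f$ with the continuous projection at $0$. Since \setofstates is finite and carries the discrete topology, each fibre $g^{-1}(\{s\})$ is clopen in the Cantor topology and is therefore a union of basic cylinders supported on finite subsets of \lattice. The finitely many fibres partition $\setofstates^\lattice$, and by compactness of $\setofstates^\lattice$ (Tychonoff applied to a finite alphabet) I can extract finitely many such cylinders whose supports all fit inside a common Manhattan ball $\neigh_r(0)$ for some $r\in\N$. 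Consequently $g(x)$ depends only on the tuple $x_{\neigh_r(0)}$, so I may define $\locrule\colon\setofstates^{(2r+1)^d}\to\setofstates$ by $\locrule(x_{\neigh_r(0)})=g(x)$. Shift-commutativity then gives, for any $x$ and any $i\in\lattice$,
\[
f(x)_i \;=\; (\sigma_i f(x))_0 \;=\; f(\sigma_i(x))_0 \;=\; \locrule\bigl((\sigma_i(x))_{\neigh_r(0)}\bigr) \;=\; \locrule\bigl(x_{\neigh_r(i)}\bigr) \;=\; \glorule(x)_i,
\]
so $f=\glorule$, exhibiting $f$ as a CCA.

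The main obstacle is precisely the extraction of the uniform radius $r$ in the converse direction. Pointwise continuity of $g$ would only yield a radius depending on the particular configuration, which is not enough to define a local rule. It is the combination of the compactness of $\setofstates^\lattice$ with the discreteness of the codomain \setofstates that upgrades pointwise continuity to a genuine finite neighborhood and thereby produces an honest local rule; once this step is in hand, shift-commutativity does the rest almost automatically.
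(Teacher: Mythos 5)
Your proof is correct and is the standard Curtis--Hedlund--Lyndon argument: the forward direction by direct computation plus the cylinder estimate, and the converse by observing the cell-$0$ coordinate map $g(x)=f(x)_0$, using the finiteness of \setofstates together with compactness of $\setofstates^\lattice$ to extract a uniform radius, and then propagating via shift-commutativity. Note that the paper offers no proof of this statement at all --- it imports the theorem from Hedlund's 1969 paper as a citation --- so there is nothing internal to compare against; your write-up correctly identifies the one genuinely non-trivial step (upgrading pointwise continuity to a uniform finite window via compactness and the discreteness of the codomain), and the rest is bookkeeping. The only caveat worth flagging is that the paper allows $\lattice=\N^d$, where the shift is not invertible and the classical statement requires some care about one-sided neighborhoods; your argument, like the paper's statement, is really tailored to $\lattice=\Z^d$.
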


\section{The Space of Probability Measures}
\label{proba}

In the following let us suppose that the state space $S$ is endowed with a metric $d$ and that $(S,d)$ is a compact metric space. The space $S$ can be discrete as $\{0,1\}$ or continuous as $[0,1]$. Let $\mathcal{B}(S)$ be the Borel sigma-algebra defined on $S$. Let us denote by $\mathcal{M}(S)$ the set of all probability measures defined on $S$. $\mathcal{M}(S)$ can be endowed with the Monge-Kantorovich metric defined as:

\begin{equation}
d_{MK}(\mu,\nu) = \sup_{f\in Lip_1(S)} \int_S f d\mu - \int_S f d\nu     
\end{equation}

where $Lip_1(S)$ is the set of all Lipschitz-1 functions on $S$, that is 

\begin{equation}
Lip_1(S) = \{f:S\to\R\text{ s.\@\,t.\@\;} |f(x)-f(y)|\le d(x,y)\}.   
\end{equation}

{\color{black}
The Monge-Kantorovic metric arose from a classical problem in transportation of mass \cite{Mendivil2015}. 
%is often used for comparing probabilistic approximations \cite{Mendivil2015}. 
Since then it has been employed in many applications areas, several of which are stochastic in nature and require a means to measure the distance between two probability measures. 
%It is one of many different metrics which can be endowed on the set of probability measures in order to yield the topology of weak convergence \cite{Durrett1992}.
Indeed, the set of probability measures can be equipped with the Monge-Kantorovich metric in order to yield the topology of weak convergence \cite{Durrett1992}.
A particularly important feature of the Monge-Kantorovich distance is that it %incorporates 
is tied to the underlying metric $d$ on the space $S$. A simple example illustrating this link is provided below. 
}

\begin{proposition}
Let us suppose that $S$ is a finite and discrete set. Let $d$ the discrete metric defined on $S$, and let us consider the set $D\subseteq \mathcal{M(S)}$ 
composed by all Dirac measures on $S$, that is
\begin{equation}
\mathcal{D} = \{\delta_s, s\in S \}.    
\end{equation}
Then the metric $d_{MK}$ collapses to the discrete metric $d$ on $\mathcal{D}$.
\end{proposition}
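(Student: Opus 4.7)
The plan is to exploit the defining property of Dirac measures, namely $\int_S g\, d\delta_s = g(s)$, to reduce the Monge--Kantorovich distance between two Diracs to a pointwise quantity on $S$. Concretely, I would first rewrite, for any $s,t\in S$,
\begin{equation*}
d_{MK}(\delta_s,\delta_t) \;=\; \sup_{f\in Lip_1(S)} \bigl(f(s) - f(t)\bigr),
\end{equation*}
so the whole problem becomes: compute this supremum when $S$ carries the discrete metric.

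Next I would split into the trivial case $s=t$, which immediately gives $d_{MK}(\delta_s,\delta_t)=0=d(s,t)$, and the case $s\neq t$. For the upper bound in the latter case, I would plug any $f\in Lip_1(S)$ into the Lipschitz inequality $|f(s)-f(t)|\le d(s,t)$ and use $d(s,t)=1$ to obtain $f(s)-f(t)\le 1$, hence $d_{MK}(\delta_s,\delta_t)\le 1$.

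The only real step that needs some thought is showing that this upper bound is attained, i.e.\ exhibiting a $1$-Lipschitz witness. I would use the indicator $f_s(x)=\mathbf{1}_{\{x=s\}}$ (equivalently, $f(x)=1-d(x,s)$ viewed on the discrete space, which is really the same function). Its $1$-Lipschitzness with respect to the discrete metric is immediate because $|f_s(x)-f_s(y)|\in\{0,1\}$ and equals $0$ whenever $x=y$. Evaluating gives $f_s(s)-f_s(t)=1-0=1$, so the supremum equals $1=d(s,t)$.

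I do not expect any genuine obstacle here: the argument is essentially the observation that in the discrete case the class $Lip_1(S)$ is large enough to contain the indicator of any singleton. I would end by combining the two cases into the single identity $d_{MK}(\delta_s,\delta_t)=d(s,t)$ for all $s,t\in S$, which is precisely the claim that $d_{MK}$ restricted to $\mathcal{D}$ coincides (under the canonical identification $s\leftrightarrow\delta_s$) with the discrete metric on $S$.
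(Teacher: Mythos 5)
Your proposal is correct and follows essentially the same route as the paper: reduce $d_{MK}(\delta_s,\delta_t)$ to $\sup_{f\in Lip_1(S)}\bigl(f(s)-f(t)\bigr)$ and identify this with the discrete distance. The paper simply asserts that this supremum equals $d_D(s,t)$, whereas you supply the missing justification (the Lipschitz upper bound plus the indicator witness), which is a welcome elaboration but not a different argument.
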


\begin{proof}
Let us consider two points $s$ and $t$ in $S$ and the associated Dirac probability measures $x=\delta_s$ and $y=\delta_t$.
Computing $d_{MK}(x,y)$ yields
\begin{equation}
d_{MK}(x,y) = \sup_{f\in Lip_1(S)} f(s) - f(t) = d_D(s,t) ,
\end{equation}

that is, the $d_{MK}$ metric collapses to the discrete metric $d$.
\end{proof}

% maybe this example needs to go later
{\color{black}
%The metric $d_{MK}$ is also easy to compute when $S=[0,1]$.
%Because Bernoulli probability measure can be exactly characterized by the probability $p \in [0,1]$, the set of states $S=[0,1]$ corresponds to the case of (possibly distinct) Bernoulli distributions in each cell of the CA.

The following result shows that the metric $d_{MK}$ may be easily computed for real-valued probability measures. 

%\textcolor{red}{EF:} I think we need to put a reference here for Proposition 2.
\begin{proposition}\cite{vallender1974}
    Let $\mu$ and $\nu$ be two probability measures on $S \subset \mathbb{R}$ and denote their cumulative distribution functions by $F_\mu$ and $F_\nu$, respectively. Then the metric $d_{MK}$ can be expressed as
    \begin{equation}
        d_{MK}(\mu, \nu) = \int_\mathbb{R} \lvert F_\mu - F_\nu \rvert dx.
    \end{equation}
\end{proposition}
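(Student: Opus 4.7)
The plan is to use the Kantorovich--Rubinstein duality, which is taken here as the very definition of $d_{MK}$, combined with an integration by parts that rewrites the pairing $\int f\,d\mu-\int f\,d\nu$ in terms of the cumulative distribution functions. Since $S$ is a compact subset of $\R$, we may fix $a<\inf S$ and $b>\sup S$, so that $F_\mu(a)=F_\nu(a)=0$ and $F_\mu(b)=F_\nu(b)=1$, and all integrals reduce to integrals over the compact interval $[a,b]$.

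First, I would observe that any $f\in Lip_1(S)$ is absolutely continuous, hence differentiable almost everywhere on $[a,b]$ with $\lvert f'\rvert\le 1$ almost everywhere. Applying the Lebesgue--Stieltjes integration by parts formula to $\int f\,dF_\mu$ and $\int f\,dF_\nu$ and taking the difference, the boundary terms cancel because $F_\mu(b)-F_\nu(b)=F_\mu(a)-F_\nu(a)=0$, yielding the key identity
\begin{equation*}
\int_S f\,d\mu-\int_S f\,d\nu \;=\; -\int_a^b f'(t)\bigl[F_\mu(t)-F_\nu(t)\bigr]\,dt.
\end{equation*}

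From this identity, the upper bound is immediate: since $\lvert f'\rvert\le 1$ a.e., the right-hand side is bounded in absolute value by $\int_a^b\lvert F_\mu-F_\nu\rvert\,dt=\int_\R\lvert F_\mu-F_\nu\rvert\,dx$, giving $d_{MK}(\mu,\nu)\le\int_\R\lvert F_\mu-F_\nu\rvert\,dx$. For the reverse inequality, I would exhibit a near-optimal $f$ by setting
\begin{equation*}
f(x)\;=\;-\int_a^x \mathrm{sgn}\bigl(F_\mu(t)-F_\nu(t)\bigr)\,dt,
\end{equation*}
so that $f'(t)=-\mathrm{sgn}(F_\mu(t)-F_\nu(t))$ a.e.; this function is 1-Lipschitz by construction and, substituted into the identity above, reproduces exactly $\int_\R\lvert F_\mu-F_\nu\rvert\,dx$.

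The main technical obstacle is a rigorous justification of the integration by parts step: one must be careful that $F_\mu$ and $F_\nu$ may have jumps (atoms of $\mu$ or $\nu$), so the Stieltjes integrals must be interpreted correctly, and the cancellation of boundary terms relies on having chosen $a,b$ outside the support. A secondary subtle point is that the extremal $f$ constructed from the sign function is only guaranteed to be measurable; showing it belongs to $Lip_1(S)$ follows from the pointwise bound $\lvert\mathrm{sgn}\rvert\le 1$, but if one prefers to work with continuous optimizers, one would approximate $\mathrm{sgn}(F_\mu-F_\nu)$ by continuous 1-Lipschitz surrogates and pass to the limit, recovering the supremum.
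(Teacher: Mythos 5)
The paper does not prove this proposition at all: it is stated as a known result and attributed to Vallender's 1974 paper, so there is no internal argument to compare against. Your proposal supplies the standard self-contained proof of that result, and it is essentially correct: since the paper takes the dual (supremum over $Lip_1$) formula as the \emph{definition} of $d_{MK}$, you do not even need to invoke Kantorovich--Rubinstein duality as a theorem; the whole content is the integration-by-parts identity
$\int_S f\,d\mu-\int_S f\,d\nu=-\int_a^b f'(t)\bigl[F_\mu(t)-F_\nu(t)\bigr]\,dt$,
which is legitimate because $f$ is continuous (so the Lebesgue--Stieltjes integration by parts has no common-discontinuity issue even when $\mu,\nu$ have atoms) and absolutely continuous with $\lvert f'\rvert\le 1$ a.e. The upper bound and the extremal choice $f'=-\mathrm{sgn}(F_\mu-F_\nu)$ then give the two inequalities. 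One small correction to your closing remarks: there is no measurability or continuity obstacle for the optimizer. The function $f$ you construct is the integral of a bounded measurable function, hence automatically absolutely continuous and $1$-Lipschitz (so in particular continuous); only its derivative $f'$ is merely measurable, and nothing in the argument requires more. The approximation-by-continuous-surrogates step you sketch is therefore unnecessary. The only hypotheses you genuinely use are that $S$ is a bounded subset of $\R$ (so that $a$ and $b$ exist and $\int_\R\lvert F_\mu-F_\nu\rvert\,dx$ is an integral over a bounded interval) and that the metric $d$ on $S$ is the Euclidean one, both of which are consistent with the paper's standing assumptions in that section.
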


Note that $d_{MK}$ simplifies even further in the case of Bernoulli probability measures. Let $\mu$ be a Bernoulli measure with probability $q$ and $\nu$ be a Bernoulli measure with probability $p$. Then,
\begin{equation}
    d_{MK}(\mu, \nu) = \int_\mathbb{R} \lvert F_\mu - F_\nu \rvert dx =  \ \int_0^1 \lvert (1-q) - (1-p) \rvert dx = \lvert p-q\rvert.
    \label{eqn:L1}
\end{equation}
In words, we obtain the desirable property that the Monge-Kantorovich distance simplifies to the $L^1$ distance between the probabilities $p$ and $q$.
}

\begin{remark}
Let us notice that for 
$f\in Lip_1(S)$ and 
$0\in S$ it holds

\begin{equation}
\int_S f d\mu - \int_S f d\nu  \le \int_S f(0)+ d(x,0) d\mu - \int_S f(0)-d(x,0)d\nu \le 2 K  
\end{equation}

where $K=\max_{x\in S} d(x,0)$. In other words, $d_{MK}(\mu,\nu)\le 2K$ for any $\nu,\mu\in \mathcal{M}(S)$, and hence $(d_{MK}, \mathcal{M(S)})$ is a bounded metric space.
\end{remark}

\begin{proposition}
\cite{hutchinson1981fractals}
If $(S,d)$ is a compact metric space then
the metric space $(\mathcal{M}(S),d_{MK})$ is compact and, therefore, complete.
\end{proposition}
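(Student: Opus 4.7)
The plan is to reduce the compactness of $(\mathcal{M}(S),d_{MK})$ to a standard weak-$*$ compactness argument, leveraging the fact already noted earlier in the text that the Monge-Kantorovich metric metrizes the topology of weak convergence of probability measures. Once compactness is established, completeness is immediate since any compact metric space is complete.

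First, I would embed $\mathcal{M}(S)$ into the topological dual $C(S)^*$ via the Riesz--Markov representation, sending each $\mu\in\mathcal{M}(S)$ to the bounded linear functional $\Lambda_\mu(f)=\int_S f\,d\mu$. Because $\mu$ is a probability measure, $\Lambda_\mu$ is positive and satisfies $\Lambda_\mu(\mathbf{1})=1$, so the image lies in the closed unit ball of $C(S)^*$. By the Banach--Alaoglu theorem, this unit ball is compact in the weak-$*$ topology. Moreover, since $(S,d)$ is a compact metric space, $C(S)$ is separable (for instance, via Stone--Weierstrass applied to the algebra generated by a countable dense family of Lipschitz functions), and hence the weak-$*$ topology restricted to the closed unit ball of $C(S)^*$ is metrizable.

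Next, I would check that $\mathcal{M}(S)$ is weak-$*$ closed in this unit ball. This follows because
\begin{equation}
\mathcal{M}(S) = \{\Lambda\in C(S)^*:\ \Lambda(\mathbf{1})=1\ \text{and}\ \Lambda(f)\ge 0\ \text{for all}\ f\ge 0\},
\end{equation}
and each defining condition is preserved under weak-$*$ limits, being the intersection of pre-images of closed sets under continuous evaluation maps $\Lambda\mapsto\Lambda(f)$. A closed subset of a compact space is compact, so $\mathcal{M}(S)$ is weak-$*$ compact and metrizable.

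Finally, I would invoke the equivalence between weak-$*$ convergence of $\Lambda_{\mu_n}$ and convergence of $\mu_n$ in $d_{MK}$. One direction is easy from the definition: if $d_{MK}(\mu_n,\mu)\to 0$, then $\int f\,d\mu_n\to\int f\,d\mu$ for every $f\in Lip_1(S)$, and any $f\in C(S)$ on a compact metric space can be uniformly approximated by Lipschitz functions (e.g.\ via the Stone--Weierstrass or McShane extension construction), yielding weak convergence. The converse, namely that weak convergence implies $d_{MK}\to 0$, is the subtler part: it uses Kantorovich--Rubinstein duality combined with the uniform boundedness $d_{MK}(\mu,\nu)\le 2K$ established in the preceding remark, together with equicontinuity of $Lip_1(S)$ and Arzel\`a--Ascoli to exchange the supremum with the limit. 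This equivalence is the technical heart of the argument and is where the main obstacle lies, but it is precisely the content of \cite{hutchinson1981fractals} that the paper is citing. Once established, weak-$*$ compactness of $\mathcal{M}(S)$ transfers to compactness with respect to $d_{MK}$, and completeness follows at once.
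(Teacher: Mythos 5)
The paper offers no proof of this proposition --- it is imported verbatim from \cite{hutchinson1981fractals} --- so the only comparison available is with the standard argument in the literature, which is exactly what you reproduce: the Riesz--Markov embedding into the closed unit ball of $C(S)^*$, Banach--Alaoglu together with separability of $C(S)$ to get metrizable weak-$*$ compactness, weak-$*$ closedness of the set of probability functionals, and the equivalence of weak-$*$ convergence with $d_{MK}$-convergence (the nontrivial direction handled by total boundedness of the normalized $Lip_1(S)$ ball via Arzel\`a--Ascoli, which lets you exchange the supremum with the limit). Your argument is correct and complete, and completeness of $(\mathcal{M}(S),d_{MK})$ does follow immediately since every compact metric space is complete.
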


%{\color{blue}
Let us notice that the space $\mathcal{M}(S)$ is not a group but it is closed with respect to the convex combination of probability measures (see \cite{barnsley2014fractals, hutchinson1981fractals,kunze2011} for more details).
%}

\begin{proposition}
\cite{barnsley2014fractals} 
Let $\mu_1,\mu_2,...,\mu_n\in \mathcal{M}(S)$ and $\nu_1,\nu_2,...\nu_n\in \mathcal{M}(S)$ be two sets of probability measures and $\lambda_1, \lambda_2, ...,\lambda_n$ a set of weights such that $\sum_{j=1}^n \lambda_j = 1$. Then, the following inequality holds:

\begin{equation}
d_{MK} \left(\sum_{j=1}^n \lambda_j \mu_j,\sum_{j=1}^n \lambda_j \nu_j\right)\le \sum_{j=1}^n \lambda_j d_{MK}(\mu_j,\nu_j)   
\end{equation}

\end{proposition}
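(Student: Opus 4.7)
The plan is to work directly from the definition of the Monge-Kantorovich distance and exploit the linearity of the Lebesgue integral together with the sublinearity of the supremum. Throughout, I will tacitly use the fact that the weights $\lambda_j$ are nonnegative, which is implicit in the convex-combination setting (otherwise $\sum_j \lambda_j \mu_j$ need not be a probability measure at all).

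First, I would unfold the left-hand side using the definition:
\begin{equation*}
d_{MK}\!\left(\sum_{j=1}^n \lambda_j \mu_j,\sum_{j=1}^n \lambda_j \nu_j\right) = \sup_{f\in Lip_1(S)} \left[\int_S f\, d\!\left(\sum_{j=1}^n \lambda_j \mu_j\right) - \int_S f\, d\!\left(\sum_{j=1}^n \lambda_j \nu_j\right)\right].
\end{equation*}
Next, I would apply linearity of the integral with respect to finite convex combinations of measures, rewriting the bracketed expression as $\sum_{j=1}^n \lambda_j \bigl(\int_S f\, d\mu_j - \int_S f\, d\nu_j\bigr)$.

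The key step is then to pull the supremum inside the finite sum. Because each $\lambda_j \geq 0$, we have the standard inequality
\begin{equation*}
\sup_{f\in Lip_1(S)} \sum_{j=1}^n \lambda_j \left(\int_S f\, d\mu_j - \int_S f\, d\nu_j\right) \le \sum_{j=1}^n \lambda_j \sup_{f\in Lip_1(S)} \left(\int_S f\, d\mu_j - \int_S f\, d\nu_j\right),
\end{equation*}
obtained by noting that for any admissible $f$, each summand is bounded above by $\lambda_j d_{MK}(\mu_j,\nu_j)$; taking the sup of the left side therefore yields at most the right side. Recognizing each supremum on the right as $d_{MK}(\mu_j,\nu_j)$ completes the proof.

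There is no real obstacle here: the argument is entirely mechanical once one observes that the only two ingredients needed are linearity of the integral (which is standard for signed combinations of measures) and sub-additivity of the supremum on sums of nonnegatively-weighted functionals. The one place to be slightly careful is to make sure nonnegativity of the $\lambda_j$ is invoked when interchanging sum and sup; negative weights would break the direction of the inequality.
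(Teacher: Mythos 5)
Your proof is correct. Note that the paper does not actually prove this proposition---it is imported verbatim with a citation to Barnsley's book---so there is no in-paper argument to compare against; your derivation (linearity of the integral over the convex combination, followed by subadditivity of the supremum over the finite sum, using $\lambda_j\ge 0$) is the standard and essentially unique elementary proof, and you are right to flag that nonnegativity of the weights is the one hypothesis that must be invoked even though the statement only writes $\sum_{j=1}^n\lambda_j=1$.
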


\section{Cellular Automata on Probability Measures}

A Cellular Automaton on probability Measures (CAM) is defined on configurations whose elements are probability measures. In other words, in this new setting, the standard state set $S$ is replaced with $\mathcal{M}(S)$. 
Let us notice that this definition collapses to the classical one when the probability measure in the cell $i$ coincides with the Dirac measure on a specific state $s\in S$. In other words, a string of states for a CCA can be embedded into this new stochastic framework by associating with each cell element $s\in S$ the probability mass at $s$.

%It is easy to prove that $\mathcal{M(S)}^L$ is also closed with respect to the convex combination of its elements. 
{\color{black}
In the classical setting where $S$ is a discrete set of states, $S^L$ is not a convex space. A nice property of our framework, as we prove below, is that $\MSL$ is closed with respect to convex combinations of configurations. This fact motivates a particularly interesting choice of local rule which we will define in the next section.  
}

We are now ready to introduce the distance $d_\mathcal{M}$ between two different configurations $x$ and $y$ in $\MSL$, which relies on the Monge-Kantorovich distance $d_{MK}$ between the measures $x_i$ and $y_i$ defined in each cell.

\begin{definition}
For any pair of elements $x,y\in \mathcal{M}(\setofstates)^\lattice$, let us define the function

\begin{equation}
d_{\mathcal{M}}(x,y) = 
\sum_{i\in\lattice} \frac{d_{MK}(x_i,y_i)}{2^{|i|}}    
\end{equation}
where $|\cdot|$ is the norm of $i$.
\end{definition}

%A simple example to illustrate our new framework follows.
We now present a simple example illustrating this new framework.
\begin{example}
 Consider $\setofstates=\set{0,1}$ and the set of all possible discrete probability measures on \setofstates. Any probability can take only two values $p$ and $1-p$, respectively. Let us now consider a set of possible configurations in $\setofstates^\Z$ and, for the sake of simplicity, consider the elements $x_{-1}$, $x_0$ and $x_1$ at the time $t$. We now consider the transition map that associates each triple as above with a new probability measure defined by the convex combinations of $x_{-1}$, $x_0$, and $x_1$ with weights $\lambda_{-1}$, $\lambda_0$, and $\lambda_1$, respectively. An illustration is provided in Fig. \ref{CAbasicgraph} with $\lambda_{-1}=0.2$, $\lambda_0=0.4$, and $\lambda_{1}=0.4$. 
Let us notice, using Eq.~(\ref{eqn:L1}), that in this case the CA $(\setofstates^\Z,d_{\mathcal{M}})$ is 
equivalent to the CA $([0,1]^\Z,d^*_{L^1})$ where $d^*_{L^1}$ is the weighted $L^1$ distance defined as;
$$
d^*_{L^1}(x,y)= \sum_{k=-\infty}^{+\infty} \frac{|x_k-y_k|}{2^{|k|}}
$$\qed
\end{example}

\begin{figure}[t]
\begin{center}
\includegraphics[width=0.67\textwidth]{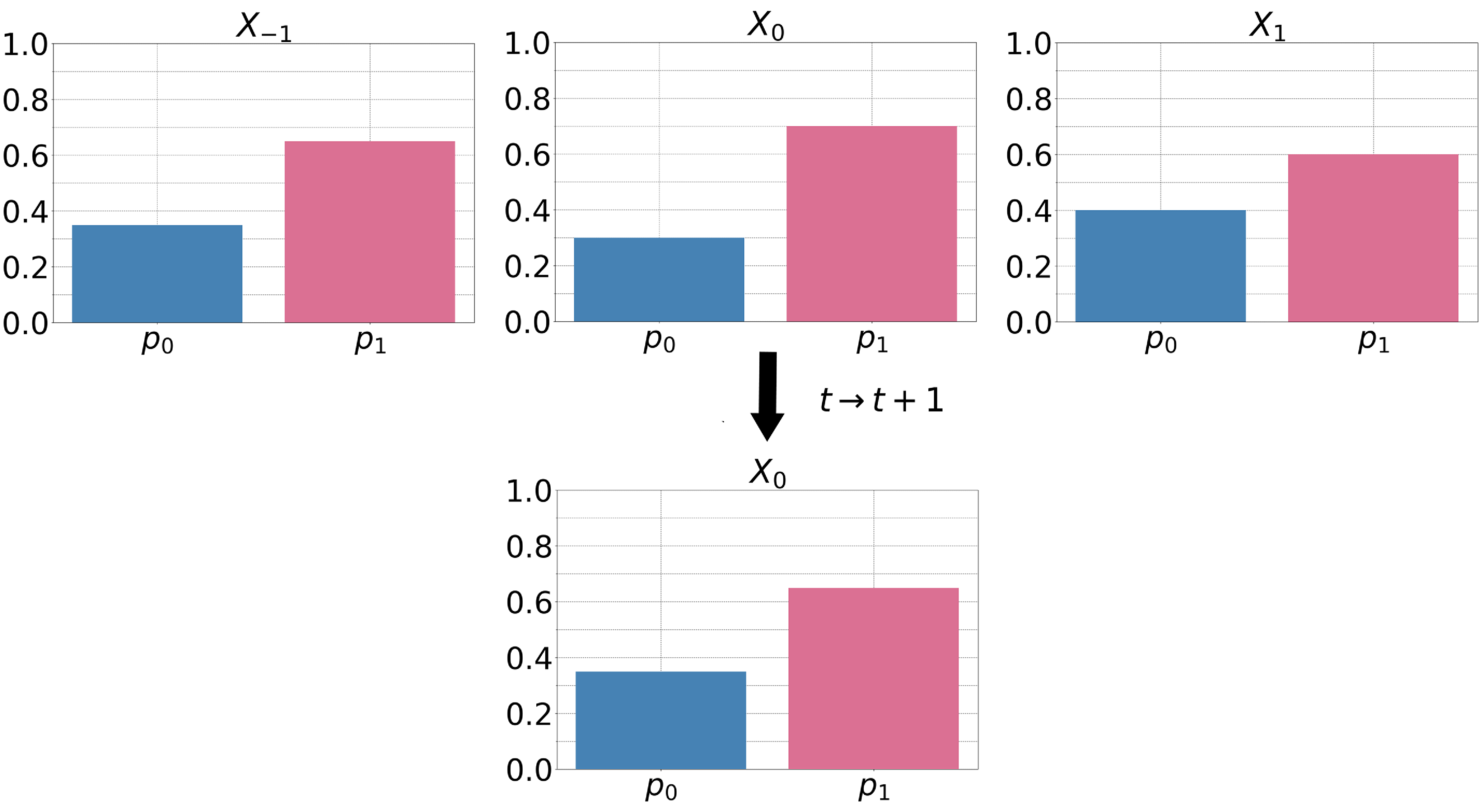}
\end{center}
\caption{CA on Probability Measures defined on \set{0,1}}
\label{CAbasicgraph}
\end{figure}

%%%%% EQUICONTINUITY %%%%%%%%%%%%%%%%%%%%%%%%%
%{\color{purple}
%We may also employ the following metric which instead has clear connections to the $L^\infty$ norm.
%which is instead considered as an infinity norm over the space of configurations.

%\begin{definition}
%For any pair of elements $x,y\in \mathcal{M(S)}^\lattice$, define the following alternative metric on $\mathcal{M(S)}^\lattice$,

%\begin{equation}
%d_{\mathcal{M}}^\infty(x,y) = 
%\sup_{i\in\lattice} d_{MK}(x_i,y_i)    
%\end{equation}
%where $|\cdot|$ is the norm of $i$.
%\end{definition}}
%%%%%%%%%%%%%%%%%%%%%%%%%%%%%%%%%%%%%%%%%%%%

%\subsection
%{The $L^1$ distance case}

The following proposition proves some topological properties of the space
$(\mathcal{M}(\setofstates),d_{\mathcal{M}})$.

\begin{proposition}
\label{prop:MSL-compact}
Suppose that $(\setofstates,d)$ is compact. Then, $(\mathcal{M}(\setofstates)^\lattice,d_{\mathcal{M}})$ is compact and, therefore, complete. 
\end{proposition}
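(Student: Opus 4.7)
The plan is to exhibit $(\MSL, d_{\mathcal{M}})$ as a countable product of copies of the compact metric space $(\mathcal{M}(\setofstates), d_{MK})$, equipped with a metric that induces the product topology, and then invoke the countable-product version of Tychonoff's theorem. Completeness will then follow for free, since every compact metric space is complete.

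First I would verify that $d_{\mathcal{M}}$ is a well-defined (finite) metric on $\MSL$. The bound $d_{MK}(\mu,\nu) \le 2K$ from the Remark, combined with the fact that the number of lattice points $i\in\lattice$ with $|i| = n$ grows only polynomially in $n$ (at most like $n^{d-1}$ under the Manhattan norm), makes the series $\sum_{i\in\lattice} d_{MK}(x_i,y_i)/2^{|i|}$ converge uniformly in $x,y$. The three metric axioms for $d_{\mathcal{M}}$ then follow termwise from the corresponding properties of $d_{MK}$.

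Next I would show that $d_{\mathcal{M}}$ induces the product topology on $\MSL$. This is the standard weighted-series argument: convergence in $d_{\mathcal{M}}$ forces coordinatewise convergence, because each summand is dominated by $2^{|i|} d_{\mathcal{M}}$; conversely, given any $\varepsilon > 0$, one truncates the series at a large finite window $\{i\in\lattice : |i| \le n\}$ so that the tail is bounded by $\varepsilon/2$ uniformly (using the uniform bound $2K$), and then controls the finite head via a basic product-topology neighborhood.

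With the topology identified, compactness reduces to a diagonal subsequence argument (the countable case of Tychonoff). Given a sequence $(x^{(k)})_k$ in $\MSL$, enumerate $\lattice$ as $i_1,i_2,\dots$; use compactness of $(\mathcal{M}(\setofstates),d_{MK})$ to extract a subsequence along which $x^{(k)}_{i_1}$ converges, then a further subsequence for $i_2$, and so on; the diagonal subsequence then converges coordinatewise to some $x^* \in \MSL$, and by the previous paragraph this is exactly convergence in $d_{\mathcal{M}}$. The main obstacle is simply the bookkeeping needed to cleanly link the weighted series to the product topology (choosing the truncation level uniformly and handling the polynomial growth of lattice shells); once that interplay is established, completeness is immediate from the well-known fact that every compact metric space is complete.
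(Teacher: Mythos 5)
Your proposal is correct and follows essentially the same route as the paper: identify $\MSL$ as a countable product of the compact space $(\mathcal{M}(\setofstates),d_{MK})$, check that $d_{\mathcal{M}}$ is well-defined and metrizes the product topology, and conclude by the countable Tychonoff theorem (your diagonal subsequence extraction is just its standard proof in the metric setting), with completeness following from compactness. The only difference is that you spell out the details the paper delegates to citations.
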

\begin{proof}
The compactness of \setofstates implies that the above function $d_{\mathcal{M}}$ is well-defined and bounded.
Furthermore, $d_{\mathcal{M}}$ generates the product topology \cite{sagar2021compactness}.
%The statement follows from the fact that any countable product of compact metric spaces is compact  in the product topology~\cite{sagar2021compactness}. \\
The remainder of the proof follows from Tychonoff's theorem which states that any countable product of compact metric spaces is compact in the product topology~\cite[Th. 13 p. 143]{kelley1975}. Completeness follows immediately from compactness. 
%\textcolor{red}{EF:} This last result is the classical Tychonoff theorem... so here I would cite a classical book of topology here in place of [16]. For example, J. L. Kelly 'General Topology', Graduate Text in Mathematics, Springer, 1975.
\end{proof}

\begin{proposition}
\label{prop:MSL-convex}
$\mathcal{M}(S)^\lattice$ is a convex space.
\end{proposition}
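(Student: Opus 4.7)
The plan is to reduce the statement to a well-known property of the single-cell space $\mathcal{M}(S)$ and then lift it coordinatewise to the product $\mathcal{M}(S)^\lattice$. The claim of convexity here means: for any two configurations $x,y\in\MSL$ and any $\lambda\in[0,1]$, the pointwise convex combination $z$ defined by $z_i=\lambda x_i+(1-\lambda)y_i$ for all $i\in\lattice$ is again an element of $\MSL$.

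First, I would recall (this has already been remarked in the paper, with references to \cite{barnsley2014fractals,hutchinson1981fractals,kunze2011}) that $\mathcal{M}(S)$ itself is closed under convex combinations: if $\mu,\nu\in\mathcal{M}(S)$ and $\lambda\in[0,1]$, then the set function $\lambda\mu+(1-\lambda)\nu$ is non-negative, countably additive, and assigns total mass $\lambda+(1-\lambda)=1$ to $S$, hence is a probability measure on $\mathcal{B}(S)$. This is exactly the single-cell fact we need.

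Second, I would apply this cell by cell. For each $i\in\lattice$, $x_i,y_i\in\mathcal{M}(S)$, so by the previous step $z_i=\lambda x_i+(1-\lambda)y_i\in\mathcal{M}(S)$. Therefore $z\in\MSL$, establishing convexity of the product space. (If desired, one can package this by observing more generally that an arbitrary product of convex subsets of vector spaces is convex.)

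I do not anticipate any genuine obstacle: the statement is essentially the coordinatewise extension of the fact that the simplex of probability measures on $S$ is convex, a property already invoked in the preceding discussion and used implicitly in the example with weights $\lambda_{-1},\lambda_0,\lambda_1$. The only thing to be careful about is to specify the natural coordinatewise vector-space structure on $\MSL$ so that ``convex combination'' is well-defined; once this is in place, the argument is immediate.
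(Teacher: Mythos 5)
Your proposal is correct and follows essentially the same route as the paper: the paper's proof simply asserts that for $\lambda\in[0,1]$ and $x,y\in\MSL$ the combination $\lambda x+(1-\lambda)y$ lies in \MSL ``by trivial calculations,'' and your coordinatewise argument (each $\lambda x_i+(1-\lambda)y_i$ is non-negative, countably additive, and has total mass $1$) is exactly the computation being alluded to. You merely make explicit what the paper leaves implicit, which is a welcome improvement in rigor but not a different approach.
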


\begin{proof}
To show the convexity property, let $\lambda\in [0,1]$ and $x,y\in \mathcal{M(S)}^L$. By trivial calculations it is easy to prove that $\lambda x + (1-\lambda)y \in \mathcal{M(S)}^L$.
\end{proof}

%%%%%%%%%%%%%%%% EQUICONTINUITY %%%%%%%%%%
%{\color{purple}
%The following proposition analyses the topological properties of the metric space
%$(\mathcal{M(S)}^\lattice,d_{\mathcal{M}}^\infty(x,y))$.

%\begin{proposition}
%The space $(\mathcal{M(S)}^\lattice,d_{\mathcal{M}}^\infty(x,y))$ is complete.    
%\end{proposition}

%\begin{proof}
%In order to prove this result, it is enough to observe that if we take a Cauchy sequence $x^n$ in    $(\mathcal{M(S)}^\lattice,d_{\mathcal{M}}^\infty(x,y))$ then each component $(x^n)_i$ is Cauchy. Then the convergence follows by using the completeness of $\mathcal{M(S)}$ with respect to the Monge-Kantorovich distance. 
%\end{proof}}

\begin{proposition}\label{prop:f-continuous-then-has-fix-point}
Let \setofstates be a compact set, and $f:(\mathcal{M(\setofstates)}^\lattice,d_{\mathcal{M}})\to (\mathcal{M(\setofstates)}^\lattice,d_{\mathcal{M}})$ be a continuous function. Then $f$ has at least one fixed point.    
\end{proposition}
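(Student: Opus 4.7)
The plan is to apply a fixed point theorem that generalizes Schauder's theorem to compact convex subsets of locally convex topological vector spaces, namely the Schauder--Tychonoff fixed point theorem: every continuous self-map of a nonempty compact convex subset of a Hausdorff locally convex topological vector space has at least one fixed point. The strategy is simply to verify that $\mathcal{M}(S)^\lattice$, equipped with $d_{\mathcal{M}}$, sits as such a subset inside a natural ambient space.

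First I would recall the two structural facts already proved in the paper: by Proposition~\ref{prop:MSL-compact} the space $(\mathcal{M}(S)^\lattice, d_{\mathcal{M}})$ is compact, and by Proposition~\ref{prop:MSL-convex} it is convex. Since $S$ is compact, $\mathcal{M}(S)$ embeds canonically into the dual $C(S)^*$ of continuous real functions on $S$ via $\mu \mapsto (f\mapsto \int_S f\,d\mu)$, and the Monge--Kantorovich metric $d_{MK}$ metrizes the relative weak-$*$ topology inherited from this embedding (this is the classical fact that $d_{MK}$ induces the topology of weak convergence, cited earlier in the paper). Taking the $\lattice$-fold product, $\mathcal{M}(S)^\lattice$ embeds into the product TVS $V := (C(S)^*)^\lattice$ endowed with the product of the weak-$*$ topologies; this $V$ is a Hausdorff locally convex topological vector space because a product of such spaces is again locally convex and Hausdorff.

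Next I would check that the topology induced by $d_{\mathcal{M}}$ on $\mathcal{M}(S)^\lattice$ coincides with the subspace topology inherited from $V$. This follows because, as noted in the proof of Proposition~\ref{prop:MSL-compact}, $d_{\mathcal{M}}$ generates the product topology on $\mathcal{M}(S)^\lattice$, and on each coordinate factor the $d_{MK}$-topology agrees with the weak-$*$ topology. With this identification in hand, $\mathcal{M}(S)^\lattice$ is a nonempty, compact, convex subset of the Hausdorff locally convex TVS $V$, and $f$ is continuous on it in the ambient topology. Schauder--Tychonoff then immediately yields a fixed point.

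The main obstacle I anticipate is not the abstract application of the fixed point theorem but the topological matching step: one has to argue carefully that $d_{\mathcal{M}}$ really induces the subspace topology from $V$, rather than some finer metric topology that would invalidate the application. Once this identification is made, the rest is a direct citation of Schauder--Tychonoff. An alternative, if one prefers to avoid locally convex machinery, is to quote Hutchinson's theorem that $(\mathcal{M}(S), d_{MK})$ is itself a compact convex subset of a Banach space of signed measures (Kantorovich--Rubinstein norm) and then use the product structure plus Schauder in a countable product of Banach spaces; but the locally convex route is shorter and cleaner.
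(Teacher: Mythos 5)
Your proposal follows essentially the same route as the paper: invoke the compactness and convexity of $\MSL$ (Propositions~\ref{prop:MSL-compact} and~\ref{prop:MSL-convex}) and conclude by a Schauder-type fixed point theorem. The only difference is that you make explicit the ambient Hausdorff locally convex space $(C(S)^*)^\lattice$ and the matching of the $d_{\mathcal{M}}$-topology with the subspace weak-$*$ product topology, a detail the paper's proof leaves implicit when it cites Schauder's theorem directly; this is a welcome tightening rather than a different argument.
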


\begin{proof}
By Propositions~\ref{prop:MSL-compact} and~\ref{prop:MSL-convex},
$(\mathcal{M}(\setofstates)^\lattice,d_{\mathcal{M}})$ is compact and convex. Since $f$ is continuous, the thesis follows from Schauder's fixed point theorem~\cite{schauder1930fixpunktsatz,zeidler2012applied}.
\end{proof}
\smallskip

The following result shows that the shift map is Lipschitz. 

\begin{proposition}
For any $j\in \lattice$, the shift map  $\sigma_j(x)_i=x_{i+j}$ is Lipschitz continuous on 
$(\mathcal{M(S)}^L,d_{\mathcal{M}})$.
\end{proposition}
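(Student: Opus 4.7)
The plan is to unfold both sides of the desired Lipschitz inequality using the definition of $d_{\mathcal{M}}$ and then re-index the sum so that the Monge--Kantorovich terms line up with the original configuration's indices.

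First, I will write out
\begin{equation*}
d_{\mathcal{M}}(\sigma_j(x),\sigma_j(y)) = \sum_{i\in\lattice} \frac{d_{MK}((\sigma_j x)_i,(\sigma_j y)_i)}{2^{|i|}} = \sum_{i\in\lattice}\frac{d_{MK}(x_{i+j},y_{i+j})}{2^{|i|}} ,
\end{equation*}
and then substitute $k = i+j$, which gives $i = k-j$ and $|i| = |k-j|$. The sum becomes $\sum_{k\in\lattice} \frac{d_{MK}(x_k,y_k)}{2^{|k-j|}}$.

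The key step is to bound the weight $2^{-|k-j|}$ in terms of $2^{-|k|}$. By the reverse triangle inequality for the Manhattan norm, $|k-j| \ge |k| - |j|$, hence $2^{-|k-j|} \le 2^{|j|}\cdot 2^{-|k|}$. Plugging this in and factoring out $2^{|j|}$ yields
\begin{equation*}
d_{\mathcal{M}}(\sigma_j(x),\sigma_j(y)) \;\le\; 2^{|j|}\sum_{k\in\lattice}\frac{d_{MK}(x_k,y_k)}{2^{|k|}} \;=\; 2^{|j|}\, d_{\mathcal{M}}(x,y),
\end{equation*}
so $\sigma_j$ is Lipschitz continuous with constant $L_j = 2^{|j|}$.

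I do not foresee a real obstacle here: once one recognizes that $\sigma_j$ is essentially a re-indexing and that the weights $2^{-|i|}$ change by at most a factor of $2^{|j|}$ under this re-indexing, the argument is mechanical. The only small point worth stressing is that because $d_{\mathcal{M}}$ is known to be finite (Proposition~\ref{prop:MSL-compact} and the boundedness remark for $d_{MK}$), all the series involved converge absolutely, so the change of variables $k = i+j$ is justified term-by-term.
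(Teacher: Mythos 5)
Your proof is correct and follows essentially the same route as the paper's: unfold the definition of $d_{\mathcal{M}}$, re-index with $k=i+j$, and apply the reverse triangle inequality $|k-j|\ge |k|-|j|$ to obtain the Lipschitz constant $2^{|j|}$. The remark on absolute convergence justifying the re-indexing is a small but welcome addition that the paper leaves implicit.
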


\begin{proof}
To prove it, let us take two elements $x$ and $y$ in $\mathcal{M(S)}^L$ and calculate the distance between $\sigma_j(x)$ and $\sigma_j(y)$. Computing, we have:
\begin{align}
d_{\mathcal{M}}(\sigma_j(x),\sigma_j(y)) &=
\sum_{i\in\lattice} {d_{MK}((\sigma_j(x))_i,(\sigma_j(y))_i)\over 2^{|i|}} \nonumber\\ 
&=
\sum_{i\in\lattice} {d_{MK}(x_{i+j},y_{i+j})\over 2^{|i|}}
\end{align}
and by doing the substitution $k=i+j$ and remembering that $|j-k|=|k-j|\ge |k| - |j|$ we have that
\begin{align}
d_{\mathcal{M}}(\sigma_j(x),\sigma_j(y)) &=
\sum_{k\in\lattice} {d_{MK}(x_{k},y_{k})\over 2^{|j-k|}} \le 
\sum_{k\in\lattice} {d_{MK}
(x_{k},y_{k})\over 2^{|k|-|j|}} \nonumber\\[2mm] 
&=
2^{|j|} \sum_{k\in\lattice} {d_{MK}(x_{k},y_{k})\over 2^{|k|}} = 2^{|j|} d_{\mathcal{M}}(x,y)
\end{align}
\end{proof}
\smallskip

%\begin{proposition}
%CA are continuous and shift invariant on
%$(\mathcal{M(S)}^\lattice,d_{\mathcal{M}})$ and, \viceversa, continuous shift-invariant functions on $(\mathcal{M(S)}^\lattice,d_{\mathcal{M}})$ are CA.
%\end{proposition}

It is easy to prove that the shift map $\sigma_j$ has an uncountable set of fixed points, namely all constant sequences taking the form $x_i=\mu$ for any probability measure $\mu\in \mathcal{M}(\setofstates)$.

The following result presents the case of a contraction mapping $f$ on \MSL.

\begin{proposition}
Let $f:\MSL\to\MSL$ be a contraction mapping, that is there exists a constant $c_f\in [0,1)$ such that:
\begin{equation}
d_{\mathcal{M}}(f(x),f(y))\le C d_{\mathcal{M}}(x,y)    
\end{equation}
Then, $f$ has a unique fixed point $\bar x\in \mathcal{M(S)}^\lattice$, and for any configuration $x^0\in \mathcal{M(S)}^\lattice$ the orbit $f^n(x^0)$ converges to $\bar x$ whenever $n\to +\infty$. 
\end{proposition}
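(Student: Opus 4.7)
The plan is to apply the Banach contraction mapping theorem to the complete metric space $(\MSL, d_{\mathcal{M}})$. Completeness has already been established in Proposition~\ref{prop:MSL-compact} as an immediate consequence of compactness, so the setting is exactly the one in which Banach's theorem applies. Since the statement does not invoke the theorem by name, I would carry out the classical argument in full.

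First, I would fix an arbitrary starting configuration $x^{0}\in\MSL$ and define the orbit $x^{n}=f^{n}(x^{0})$. By iterating the contraction inequality, I obtain
\begin{equation}
d_{\mathcal{M}}(x^{n+1},x^{n}) \;\le\; c_{f}^{\,n}\, d_{\mathcal{M}}(x^{1},x^{0}).
\end{equation}
Then, for any $m>n$, the triangle inequality together with the geometric sum gives
\begin{equation}
d_{\mathcal{M}}(x^{m},x^{n}) \;\le\; \sum_{k=n}^{m-1} c_{f}^{\,k}\, d_{\mathcal{M}}(x^{1},x^{0}) \;\le\; \frac{c_{f}^{\,n}}{1-c_{f}}\, d_{\mathcal{M}}(x^{1},x^{0}).
\end{equation}
Because $c_{f}\in[0,1)$, the right-hand side tends to $0$ as $n\to+\infty$, so $\{x^{n}\}$ is a Cauchy sequence in $\MSL$.

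Next, by completeness of $(\MSL,d_{\mathcal{M}})$ (Proposition~\ref{prop:MSL-compact}), the Cauchy sequence converges to some $\bar{x}\in\MSL$. A contraction is automatically Lipschitz and hence continuous, so passing to the limit in the identity $x^{n+1}=f(x^{n})$ yields $\bar{x}=f(\bar{x})$, establishing existence of a fixed point.

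Finally, for uniqueness, I would suppose $\bar{x}$ and $\bar{y}$ are both fixed points and apply the contraction hypothesis:
\begin{equation}
d_{\mathcal{M}}(\bar{x},\bar{y}) \;=\; d_{\mathcal{M}}(f(\bar{x}),f(\bar{y})) \;\le\; c_{f}\, d_{\mathcal{M}}(\bar{x},\bar{y}),
\end{equation}
which, since $c_{f}<1$, forces $d_{\mathcal{M}}(\bar{x},\bar{y})=0$ and hence $\bar{x}=\bar{y}$. The convergence of every orbit $f^{n}(x^{0})$ to this unique $\bar{x}$ then follows by applying the Cauchy estimate above with $m\to\infty$ and using uniqueness of the limit. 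There is no real obstacle here beyond making sure we invoke completeness rather than compactness directly; the contraction hypothesis does all of the analytic work, and the specific structure of $\MSL$ enters only through Proposition~\ref{prop:MSL-compact}.
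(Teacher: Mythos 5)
Your proof is correct and follows essentially the same route as the paper: the paper simply invokes Banach's fixed point theorem on the complete metric space $\MSL$ (with completeness coming from Proposition~\ref{prop:MSL-compact}), and you have just unfolded the standard proof of that theorem in full. No discrepancy worth noting.
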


\begin{proof}
The proof follows by applying Banach's fixed point theorem \cite{kunze2011} to the complete metric space $\mathcal{M(S)}^\lattice$. 
\end{proof}

\section{A Local Rule: The Convex Combination Map}
\label{sec:convex-combination-map}

In this section we explore the properties of a specific local rule $\delta_C$ which is defined as follows.
For any $i \in \lattice$ and all $j\in N_r(i)$, let $\{\lambda_j\}$ be a set of weights in $[0,1]$ such that $\sum_{j\in N_r(i)} \lambda_j = 1$. 
%\ef{We shall quantify also \wrt to $i$ here otherwise it seems incoherent (no matter if $\lambda_j$ does not depend on $i$)}
Let us notice that the set of weights $\lambda_j$ is only dependent on the position of the element $j\in N_r(i)$ and is not dependent on $i$.  
Let $T: \mathcal{M(S)}\to \mathcal{M(S)}$ be a map defined on the space of probability measures $\mathcal{M(S)}$.
For any $x\in \MSL$ the action of $\delta_C$ on $x$ is defined as 
\begin{equation}
(\delta_C(x))_i = \sum_{j\in N_r(i)} \lambda_j T(x_j)    
\label{eqn:ccrule}
\end{equation}

%\begin{remark}
%Most of the results presented in this section can be extended to the case in which the operator $T$ is also depending on the neighborhood $N_j(r)$.  
%We present one of these extensions at the end of this section. 
%\end{remark}

{\color{black}
Note that the shift map $\sigma_j(x)_i$ is contained in the set of possible convex combination rules expressible in the form of Eq.~(\ref{eqn:ccrule}). To see this, take the identity map $T(x)=x$ with only one non-zero coefficient $\lambda_{i-j}=1$.
}

%The following proposition states an important result in determining the Lipschitz property of the map $\delta_C$.

{\color{black}
As we show below, an interesting consequence of this framework is that the Lipschitz property of the map $\delta_C$ depends entirely on the Lipschitz property of the map $T$. The remainder of this section establishes such Lipschitz properties and provides convergence results in the case of contractivity. 
}

\begin{proposition}
Suppose that the map $T$ satisfies the following average Lipschitz property: There exists a constant $c$ such that for any configurations $x,y \in \MSL$ and any $i\in \lattice$ it holds:

\begin{equation}
\sum_{j\in N_r(i)} \lambda_j d_{MK}(T(x_j),T(y_j))\le c d_{MK}(x_i,y_i)    
\end{equation}
%\ef{there should be a quantification on $x$ and $y$ somewhere here...}
Then the map $\delta_C:\MSL\to\MSL$ is a Lipschitz map with respect to $d_{\mathcal{M}}$ with Lipschitz factor $c$. 
\label{proofbanach1}
\end{proposition}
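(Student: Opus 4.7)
The plan is to unwind the definition of $d_{\mathcal{M}}$, reduce the cellwise distance between $\delta_C(x)$ and $\delta_C(y)$ to a convex combination estimate via the Barnsley-type inequality already established, and then invoke the average Lipschitz hypothesis on $T$ to fold everything back into $d_{\mathcal{M}}(x,y)$.

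More concretely, I would start by fixing $x,y\in\MSL$ and writing
\[
d_{\mathcal{M}}(\delta_C(x),\delta_C(y)) = \sum_{i\in\lattice} \frac{d_{MK}\bigl((\delta_C(x))_i,(\delta_C(y))_i\bigr)}{2^{|i|}}.
\]
For each fixed $i$, both $(\delta_C(x))_i$ and $(\delta_C(y))_i$ are convex combinations of the probability measures $\{T(x_j)\}_{j\in N_r(i)}$ and $\{T(y_j)\}_{j\in N_r(i)}$ with the same weights $\{\lambda_j\}$ (which sum to one by assumption). Hence the convex-combination inequality for $d_{MK}$ (the Barnsley proposition recalled above) applies and gives
\[
d_{MK}\bigl((\delta_C(x))_i,(\delta_C(y))_i\bigr) \le \sum_{j\in N_r(i)} \lambda_j\, d_{MK}\bigl(T(x_j),T(y_j)\bigr).
\]
Then the hypothesized average Lipschitz property bounds this right-hand side by $c\, d_{MK}(x_i,y_i)$.

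Plugging this cellwise estimate back into the series and factoring out $c$ gives
\[
d_{\mathcal{M}}(\delta_C(x),\delta_C(y)) \le c \sum_{i\in\lattice} \frac{d_{MK}(x_i,y_i)}{2^{|i|}} = c\, d_{\mathcal{M}}(x,y),
\]
which is the desired Lipschitz bound. There is essentially no obstacle: the statement of the average Lipschitz hypothesis has been tailored to combine cleanly with the convex-combination inequality, so the proof is a three-step chain of inequalities with no need to interchange limits or justify convergence beyond the boundedness of $d_{MK}$ (which is guaranteed by the earlier remark on compactness of $S$). The only mild care is in noting that the estimate is uniform in $i$, so that the termwise bound can be summed against the weights $2^{-|i|}$ without issue.
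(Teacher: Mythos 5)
Your proposal is correct and follows essentially the same route as the paper's proof: unwind the definition of $d_{\mathcal{M}}$, apply the convex-combination inequality for $d_{MK}$ cellwise, invoke the average Lipschitz hypothesis to get $c\,d_{MK}(x_i,y_i)$ per cell, and sum against the weights $2^{-|i|}$. Nothing further is needed.
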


\begin{proof}
In order to prove this result, let us take two elements $x,y\in\MSL$ and calculate the $d_{\mathcal{M(S)}}$ distance between them. Computing, we get:    

\begin{align}
d_{\mathcal{M}}(\delta_C(x),\delta_C(y)) &=  
\sum_{i\in\lattice} {d_{MK}((\delta_C(x))_i,(\delta_C(y))_i)\over 2^{|i|}} \nonumber\\
&=
\sum_{i\in\lattice} {d_{MK}\left(
\sum_{j\in N_r(i)} \lambda_j T(x_j),\sum_{j\in N_r(i)} \lambda_j T(y_j)
\right)\over 2^{|i|}} \nonumber\\
&\le
\sum_{i\in\lattice} \sum_{j\in N_r(i)} \lambda_j {d_{MK}\left(T(x_j),T(y_j)
\right)\over 2^{|i|}}\nonumber\\[2mm]
&\le c d_{\mathcal{M}}(x,y)
\end{align}
\end{proof}

\begin{remark}
Let us notice that the above average Lipschitz property collapses to the classical notion whenever $r=0$ and $N_r(i)=\set{i}$ for all $i\in L$.     
\end{remark}

%\textcolor{red}{EF:} we should put a sentence or a proof for the following result.
\begin{proposition}
Suppose that the map $T$ satisfies the above average Lipschitz property with Lipschitz factor $c\in [0,1)$. Then the map $\delta_C: \mathcal{M(S)}^L \to \mathcal{M(S)}^\lattice$ is a contraction with contractivity factor $c$. Furthermore, there exists a unique invariant configuration $\bar x$ such that $\delta_C(\bar x)=\bar x$ and, for any configuration $x_0$, the orbit $\delta_c^n (x_0)$ converges to $\bar x$ whenever $n\to +\infty$.   
\end{proposition}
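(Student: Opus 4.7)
The plan is to obtain this result as an essentially immediate corollary of Proposition~\ref{proofbanach1} combined with Banach's contraction principle, so I would structure the argument in three short steps.

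First, I would invoke Proposition~\ref{proofbanach1} directly: since $T$ satisfies the stated average Lipschitz property with constant $c$, that proposition yields that $\delta_C$ is a Lipschitz map on $(\MSL, d_{\mathcal{M}})$ with Lipschitz factor $c$. Because the hypothesis now restricts $c$ to the interval $[0,1)$, this Lipschitz bound is strictly less than one, which is precisely the definition of a contraction. This gives the first assertion with contractivity factor equal to $c$, and in particular no additional estimate is needed beyond what was already proved.

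Second, I would verify that the ambient space is complete. The standing assumption of Section~\ref{proba} is that $(S,d)$ is a compact metric space; under this hypothesis, Proposition~\ref{prop:MSL-compact} asserts that $(\MSL, d_{\mathcal{M}})$ is compact, and hence in particular complete. This is the only piece of topological structure required for the next step.

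Finally, I would apply Banach's fixed point theorem to the contraction $\delta_C$ acting on the complete metric space $(\MSL, d_{\mathcal{M}})$. The theorem delivers simultaneously the existence and uniqueness of an invariant configuration $\bar{x}$ with $\delta_C(\bar{x}) = \bar{x}$, together with convergence of the Picard iterates $\delta_C^n(x_0) \to \bar{x}$ as $n \to +\infty$ starting from any $x_0 \in \MSL$, which is exactly the content of the second half of the statement. Since every required ingredient (the Lipschitz estimate, completeness of the configuration space, and the fixed point theorem itself) has already been established or cited earlier in the paper, there is no genuine obstacle here; the only subtlety worth flagging in the write-up is that the compactness of $S$ is implicit and is what makes Proposition~\ref{prop:MSL-compact} available to supply completeness.
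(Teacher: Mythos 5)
Your proposal is correct and follows exactly the paper's own route: apply Proposition~\ref{proofbanach1} to get the Lipschitz factor $c<1$, then invoke Banach's fixed point theorem on the complete metric space $(\MSL,d_{\mathcal{M}})$. Your explicit remark that completeness comes from the compactness of $S$ via Proposition~\ref{prop:MSL-compact} is a welcome clarification that the paper leaves implicit, but the argument is the same.
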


\begin{proof}
The proof follows from first applying Proposition~\ref{proofbanach1} and then Banach's fixed point theorem to the contraction $\delta_c$ in the complete metric space $\MSL$.
\end{proof}

\begin{proposition}
Suppose that the map $T$ satisfies the following cell-dependent Lipschitz property: There exists a constant $c$ such that for any $i\in \lattice$ the following condition holds: 
\begin{equation}
d_{MK}(T(x_i),T(y_i))\le \frac{c}{2^{|i|}} d_{MK}(x_i,y_i)    
\end{equation}
Then, the map $\delta_C: \mathcal{M(S)}^L \to \mathcal{M(S)}^\lattice$ is Lipschitz with respect to $d_{\mathcal{M}}$ with Lipschitz constant $C=\sum_{i\in\lattice} 
\frac{c}{2^{|i|}}$.
\label{proofbanach2}
\end{proposition}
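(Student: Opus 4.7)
The plan is to mirror the structure of the proof of Proposition~\ref{proofbanach1}: chain the definition of $d_{\mathcal{M}}$, the convex-combination inequality for $d_{MK}$ established in Section~\ref{proba}, and the new cell-dependent Lipschitz assumption on $T$. The essential difference is that the hypothesis here produces a weight $c/2^{|j|}$ attached to the index $j$ of the cell being mapped, rather than a single global contractivity factor, so the cell-dependent factors must be reorganized through a swap of summations in order to isolate the geometric-series constant that forms $C$.

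Concretely, I would start from
\[
d_{\mathcal{M}}(\delta_C(x), \delta_C(y)) = \sum_{i\in\lattice} \frac{1}{2^{|i|}} \, d_{MK}\!\left(\sum_{j\in N_r(i)} \lambda_{j-i} T(x_j),\; \sum_{j\in N_r(i)} \lambda_{j-i} T(y_j)\right),
\]
apply the convex-combination inequality to bring $d_{MK}$ inside the sum over $j$, and then invoke the cell-dependent Lipschitz property to arrive at the double sum
\[
\sum_{i\in\lattice} \sum_{j\in N_r(i)} \frac{\lambda_{j-i}}{2^{|i|}} \cdot \frac{c}{2^{|j|}} \, d_{MK}(x_j, y_j).
\]
Since the Manhattan neighborhood is symmetric, i.e., $j\in N_r(i)$ iff $i\in N_r(j)$, I can interchange the order of summation, pull the factor $d_{MK}(x_j,y_j)/2^{|j|}$ outside, and it then suffices to bound the inner factor $c\sum_{i\in N_r(j)} \lambda_{j-i}/2^{|i|}$ uniformly in $j$. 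Using $\lambda_{j-i}\le 1$ and extending the sum from $N_r(j)$ to the whole lattice yields $c\sum_{i\in\lattice} 2^{-|i|} = C$, which multiplied by $d_{\mathcal{M}}(x,y)$ closes the argument.

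The main technical subtlety is justifying the interchange of summations: because every term is non-negative, $d_{MK}$ is bounded (by compactness of $\setofstates$, as noted in the earlier Remark), and the geometric weights $2^{-|i|}$ are summable over $\lattice$, Tonelli's theorem for counting measures applies and the rearrangement is rigorous. The extension of the neighborhood sum to the full lattice sum is deliberately crude and is exactly what produces the stated constant $C$; a sharper analysis exploiting $\sum_k \lambda_k = 1$ would yield a tighter Lipschitz constant but is unnecessary for the claim.
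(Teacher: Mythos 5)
Your proposal is correct and follows essentially the same route as the paper's proof: both apply the convex-combination inequality for $d_{MK}$, then the cell-dependent Lipschitz hypothesis, and then close the argument with the crude bound $\lambda_j\le 1$ together with extending a neighborhood sum to the full lattice to produce the geometric constant $C=c\sum_{i\in\lattice}2^{-|i|}$. The only difference is bookkeeping: you interchange the two summations (justified by Tonelli, as you note) and bound the resulting inner sum over $i$ by the constant $C$ uniformly in $j$, whereas the paper keeps the sum over $i$ on the outside and bounds the inner sum over $j\in N_r(i)$ directly by $d_{\mathcal{M}}(x,y)$ --- both yield the identical final estimate.
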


\begin{proof}
In fact, similarly to the proof of the previous proposition, we have:
\begin{align}
d_{\mathcal{M}}(\delta_c(x),\delta_c(y)) &= \sum_{i\in\lattice} {d_{MK}((\delta_C(x))_i,(\delta_C(y))_i)\over 2^{|i|}} \nonumber\\
&= \sum_{i\in\lattice} {d_{MK}\left(
\sum_{j\in N_r(i)} \lambda_j T(x_j),\sum_{j\in N_r(i)} \lambda_j T(y_j)
\right)\over 2^{|i|}} \nonumber\\[1mm]
&\le
\sum_{i\in\lattice} \sum_{j\in N_r(i)} \lambda_j {d_{MK}\left(T(x_j),T(y_j)
\right)\over 2^{|i|}} \nonumber\\[1mm]
&\le
\sum_{i\in\lattice} \sum_{j\in N_r(i)} 
\frac{c \lambda_j}{2^{|i|} 2^{|j|}} 
d_{MK}(x_j,y_j) \nonumber\\[1mm]
&\le  
\sum_{i\in\lattice} 
\frac{c}{2^{|i|}} \left(
\sum_{j\in N_r(i)} 
\frac{\lambda_j}{2^{|j|}} d_{MK}(x_j,y_j) 
\right) \nonumber\\[2mm]
%\end{align}
%\begin{align}
&\le 
\left(\sum_{i\in\lattice} 
\frac{c}{2^{|i|}}\right) d_{\mathcal{M}}(x,y)
\end{align}
\end{proof}

%\textcolor{red}{EF:} we should put a sentence or a proof for the following result.

\begin{proposition}
Suppose that the map $T$ satisfies the above cell-dependent Lipschitz property with constant
$C=\sum_{i\in\lattice} \frac{c}{2^{|i|}}<1$. 
Then, the map $\delta_C:\MSL\to\MSL$ is a contraction with contractivity factor $c$. Furthermore, there exists a unique invariant configuration $\bar x$ such that $\delta_C(\bar x)=\bar x$ and, for any configuration $x_0$, the orbit $\delta_c^n (x_0)$ converges to $\bar x$ whenever $n\to +\infty$ in the $d_{\mathcal{M}}$ metric.   
\end{proposition}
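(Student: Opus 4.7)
The plan is to mirror the structure of the earlier contraction proposition (the one that followed Proposition~\ref{proofbanach1}) and reduce everything to Banach's fixed point theorem on the complete metric space $(\MSL, d_{\mathcal{M}})$. So the proof is really just a short chain of invocations of results already proved.

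First, I would invoke Proposition~\ref{proofbanach2} to conclude that $\delta_C$ is Lipschitz with constant $C = \sum_{i\in\lattice}\frac{c}{2^{|i|}}$. Note that the statement of the current proposition mentions "contractivity factor $c$", but the Lipschitz constant supplied by Proposition~\ref{proofbanach2} is actually the aggregate $C$; I would remark on this and take $C$ to be the contraction factor, since that is what the hypothesis $C<1$ bounds. (Here I am assuming that the lattice is such that $\sum_{i\in\lattice} 2^{-|i|}$ is finite, which holds for $\lattice=\Z^d$ or $\N^d$ under the Manhattan norm; this is implicit throughout the paper.)

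Second, I would recall from Proposition~\ref{prop:MSL-compact} that $(\MSL,d_{\mathcal{M}})$ is compact, hence complete. Then I would apply Banach's fixed point theorem to the contraction $\delta_C\colon \MSL\to\MSL$ on this complete metric space. This gives the existence and uniqueness of $\bar x\in\MSL$ with $\delta_C(\bar x)=\bar x$, and in fact the quantitative convergence estimate
\begin{equation}
d_{\mathcal{M}}(\delta_C^n(x_0),\bar x)\le \frac{C^n}{1-C}\, d_{\mathcal{M}}(x_0,\delta_C(x_0)),
\end{equation}
so that $\delta_C^n(x_0)\to \bar x$ in the $d_{\mathcal{M}}$ metric for every initial configuration $x_0\in\MSL$.

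There is essentially no hard step here: the substantive work was done in Proposition~\ref{proofbanach2} (the Lipschitz estimate, which combines the cell-dependent Lipschitz hypothesis on $T$ with the geometric weights $2^{-|i|}$) and in Proposition~\ref{prop:MSL-compact} (completeness of the ambient space). The only minor wrinkle I want to flag is the notational discrepancy between the "$c$" in the statement and the "$C$" coming out of Proposition~\ref{proofbanach2}; I would state the contraction factor as $C$ to be consistent with the hypothesis $C<1$.
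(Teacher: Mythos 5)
Your proof is correct and follows exactly the paper's own route: invoke Proposition~\ref{proofbanach2} for the Lipschitz bound and then apply Banach's fixed point theorem on the complete space $(\MSL, d_{\mathcal{M}})$. Your side remark is also well taken --- the contraction factor is indeed the aggregate $C$ rather than $c$, a slip in the paper's statement --- and the explicit convergence estimate you add is a harmless bonus.
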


\begin{proof}
The proof follows from first applying Proposition~\ref{proofbanach2} and then Banach's fixed point theorem to the contraction $\delta_c$ in the complete metric space $\MSL$.
\end{proof}

\section{Random Graphs and Two-Dimensional Cellular Automata}
\label{sec9}\label{sec:final-rem-random-graphs}

{\color{black}
It is well known that a random graph is a graph in which properties such as the number of vertices and edges, as well as the locations of those edges, are determined in some random way. 
In practical applications, random graphs are particularly useful for modeling dynamics in a human population. They can be used, for example, in social network analysis to describe the level of interaction between people within the same network.
%Random graphs can be used in social network analysis to describe the level of interaction between people within the same network. 
%For instance, consider the infinite social network illustrated in Fig. \ref{fig:social}.
In this setting, each person in the network is connected with all other members through edges. Associated with each edge connecting two members of the network described by the nodes $i$ and $j$ is a probability measure which models the level of intensity of their relationship. A scenario in which the two people have no relationship is described by the Dirac measure concentrated at $0$. 
Of course, personal interactions evolve over time and this will generate a dynamic behaviour. 
%that can be described in terms of a two-dimensional cellular automata on probability measures. 

In this section, we will illustrate how a two-dimensional cellular automata on probability measures can be used to describe the dynamic behaviour of a random graph. We will first introduce a mathematical formalism which elucidates the link between CAMs and random graphs. We then provide a simple example to illustrate the potential of CAMs to model random graphs.
}

%In this section, we  analyze the dynamics of random graphs using the framework of CAMs. 
Consider a countable set of nodes $N_j$, $j \in \mathbb{Z}$, and let $M_{ij}$ denote the adjacency matrix of the graph, where $i, j \in \mathbb{Z}$. The matrix $M_{ij}$ can be either symmetric or asymmetric, depending on whether the graph is undirected or directed. For each entry of $M_{ij}$, we associate a probability measure on $[0,1]$ to represent the intensity of the link between nodes $i$ and $j$. This setup can naturally be described using the bidimensional CAMs introduced in previous sections. Here, the set of states is $S = [0,1]$ and the lattice is $\mathbb{Z}^2$, making the space of configurations $\mathcal{M}(S)^{\mathbb{Z}^2}$. Each configuration corresponds to the state of the random graph and encodes the probability of a specific link between nodes $i$ and $j$. 
Figure~\ref{CAgraph} provides a visual illustration showing the link between random graphs and their cellular automata representation.
For simplicity, we focus on the two-dimensional case, although extending the approach to higher-dimensional cellular automata is straightforward. 

Let us suppose we want to calculate the centrality index of a certain vertex within this framework. 
If we consider a node $i$ in the graph, we know that $x_{ij}\in \mathcal{M}(S)^{\Z^2}$, for any $j\in\N$, represents the probability associated with  the edge connecting the node $i$ and the node $j$. 

\begin{definition}
We say that a node $i$ has $\alpha$-centrality in a graph if for any node $j$ connected with $i$ we have that $x_{ij}([\alpha,1])\neq 0$. The $\alpha$-centrality index is then defined as
\begin{equation}
C_\alpha = \Pi_{j\neq i} x_{ij}([\alpha,1]).
\end{equation}
\end{definition}

\begin{figure}[t]
\begin{center}
\includegraphics[width=0.62\textwidth]{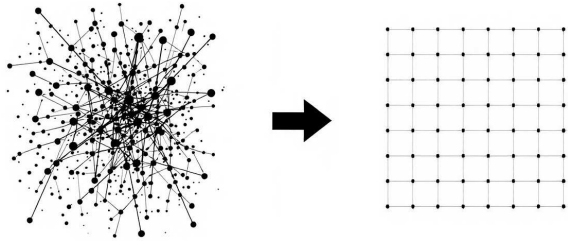}
\end{center}
\caption{CAs and Random Graphs. %\textcolor{red}{EF:}can we put some white space around the arrow so to have objects more spaced? If it is difficult then do not loose time for this...
}
\label{CAgraph}
\end{figure}

Let us notice that the $\alpha$-centrality definition depends on the specific configuration of the cellular automata associated with the graph. Because any cellular automata on $\mathcal{M}(S)^{\Z^2}$ induces a dynamic on the space of all possible configurations, the notion of $\alpha$-centrality also depends on the particular state configuration. Let us also notice that the action at time $t+1$ of the cellular automata on the probability measure defined on a specific edge depends on the same probability as well as the convex combination of the probabilities of the surrounding edges at the time $t$. 
The CA operator acting on the space $\mathcal{M}(S)^{\Z^2}$ takes the form

\begin{equation}
(\delta_C(x))_i = \sum_{j\in N_r(i)} \lambda_j T(x_j)    
\end{equation}

where the operator $T:\mathcal{M}([0,1])\to \mathcal{M}([0,1])$ belongs to the families of mappings which have been described in the previous sections. The choice of the parameters $\lambda_j$ as well as the size of the neighborhood $N_r(i)$ will affect the dynamics induced by $\delta_C$ on the graph. 
The dynamics induced by the CA on the random graph will also affect the path connecting different nodes. 
Let us introduce the following definition.

\begin{definition}
Given two nodes $i$ and $j$ we say that $i$ and $j$ are connected through a random path if there exists a sequence of edges $\mathcal{P}$ connecting $i$ and $j$ such that $x_{sl}((0,1])\not=0$ for any $x_{sl}\in\mathcal{P}$. \end{definition}

In other words, two nodes $i$ and $j$ are connected through a random path $\mathcal{P}$ if the probability of being a positive number in $(0,1]$ is non-zero for each edge in the path. 

%Next sections will be devoted to the formalization of this approach. 

If the operator $T$ is a contraction, then the operator $\delta_C$ has a unique fixed point $\bar x$ in $\mathcal{M}([0,1])^\Z$ and, for any $i$ and $j$, $x_{ij}$ is constant and it is equal to the unique fixed point of $T$. In this case, no matter what is the initial distribution of probabilities on edges, the random graph will converge to a limit situation in which all edges will have the same distribution (which is the fixed point of the operator $T$). 
%In particular if the operator $T$ is a Markov operator, then the invariant probability measure on each edge will have a fractal nature. 
This is summarized by the following proposition.

\begin{proposition}
Let $x\in \mathcal{M}([0,1])^\Z$ be a configuration of the random graph. Let $\delta_C$ be the local map whose action is defined as:  
\begin{equation}
(\delta_C(x))_i = \sum_{j\in N_r(i)} \lambda_j T(x_j) 
\label{eqn:deltaC}
\end{equation}
where the map $T$ is a contraction. Then when $n\to +\infty$ the random graph will evolve towards an invariant configuration corresponding to the fixed point of $\delta_C$. 
%\textcolor{red}{EF:} should not be $T$ here ?
\label{thm:example3}
\end{proposition}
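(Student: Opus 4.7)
The plan is to frame this proposition as a direct consequence of the contraction-mapping results for $\delta_C$ already established in the previous section. More precisely, I would deduce that $\delta_C$ is a contraction on $(\mathcal{M}([0,1])^{\mathbb{Z}}, d_{\mathcal{M}})$ and then invoke Banach's fixed point theorem, whose applicability rests on the completeness of $(\MSL, d_{\mathcal{M}})$ guaranteed by Proposition~\ref{prop:MSL-compact}.

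The first step is to check that the pointwise contractivity of $T$ with factor $c\in[0,1)$ yields the average Lipschitz hypothesis required by Proposition~\ref{proofbanach1}. Since $d_{MK}(T(\mu),T(\nu))\le c\,d_{MK}(\mu,\nu)$ for every $\mu,\nu\in\mathcal{M}(S)$, one obtains immediately, for each $i\in\lattice$,
\begin{equation*}
\sum_{j\in N_r(i)} \lambda_j\,d_{MK}(T(x_j),T(y_j)) \le c\sum_{j\in N_r(i)} \lambda_j\,d_{MK}(x_j,y_j).
\end{equation*}
Plugging this bound into the chain of inequalities developed in the proof of Proposition~\ref{proofbanach1}, and invoking the convex-combination estimate recalled in Section~\ref{proba}, I would rearrange the double sum over $i\in\lattice$ and $j\in N_r(i)$ by exchanging the order of summation and re-indexing $k=j$; the finiteness of each neighborhood together with the geometric factor $2^{-|i|}$ allows the resulting sum to be bounded by a multiple of $d_{\mathcal{M}}(x,y)$.

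Once the contraction property of $\delta_C$ is in hand, the conclusion is immediate: Banach's fixed point theorem applied to $\delta_C$ on the complete metric space $(\MSL,d_{\mathcal{M}})$ produces a unique fixed configuration $\bar x$ satisfying $\delta_C(\bar x)=\bar x$, and for every initial configuration $x_0$ the orbit $\delta_C^n(x_0)$ converges to $\bar x$ in the $d_{\mathcal{M}}$-metric. This is exactly the convergence to an invariant configuration asserted in the statement, and since each cell value $\bar x_{ij}$ at the fixed point must itself be the common fixed point of $T$, the random-graph interpretation highlighted just before the proposition follows.

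The delicate point, and the expected main obstacle, is the passage from pointwise contractivity of $T$ to contractivity of $\delta_C$ with a constant strictly less than one: the re-summation described above produces a multiplicative factor depending on the radius $r$ and on the weights $\lambda_j$, and only when this factor keeps the overall Lipschitz constant below $1$ does Banach's theorem apply directly. Making this quantitative bound explicit, either by slightly strengthening the contractivity hypothesis on $T$ relative to the neighborhood size, or by routing the argument through the cell-dependent Lipschitz variant in Proposition~\ref{proofbanach2}, would be the technical heart of the proof.
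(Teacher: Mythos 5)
Your overall route is exactly the one the paper intends: reduce to the contraction results of Section~\ref{sec:convex-combination-map} and invoke Banach's fixed point theorem on the complete space $(\MSL,d_{\mathcal{M}})$ given by Proposition~\ref{prop:MSL-compact}. In fact the paper supplies no proof of this proposition at all; it is presented as a ``summary'' of the preceding paragraph, which simply asserts that contractivity of $T$ makes $\delta_C$ a contraction. Your proposal is therefore, if anything, more careful than the source, and the final step (uniqueness of the fixed point, convergence of every orbit, and the observation that the constant configuration at the fixed point of $T$ is fixed by $\delta_C$) is handled correctly.

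The ``delicate point'' you flag is, however, a genuine gap, and it is worth being precise about why it does not close. Pointwise contractivity of $T$ gives
$\sum_{j\in N_r(i)}\lambda_j d_{MK}(T(x_j),T(y_j))\le c\sum_{j\in N_r(i)}\lambda_j d_{MK}(x_j,y_j)$,
which is \emph{not} the average Lipschitz hypothesis of Proposition~\ref{proofbanach1}: that hypothesis bounds the left-hand side by $c\,d_{MK}(x_i,y_i)$, i.e.\ by the distance at the central cell only, and your bound cannot be converted into it (take $x_i=y_i$ while $x_j\ne y_j$ for some neighbour $j$). Carrying your weaker bound through the double sum as you propose, one uses $|i|\ge |j|-r$ for $j\in N_r(i)$ to get $2^{-|i|}\le 2^{r}\,2^{-|j|}$, and, reading the weights as depending on the offset $j-i$ as in the paper's examples, $\sum_{i:\,j\in N_r(i)}\lambda_{j-i}=1$; the best Lipschitz constant this re-summation yields for $\delta_C$ is therefore $c\,2^{r}$, not $c$. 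Hence Banach's theorem applies only under the quantitative strengthening $c<2^{-r}$ (or under the average or cell-dependent Lipschitz hypotheses assumed directly in Propositions~\ref{proofbanach1} and~\ref{proofbanach2}). As stated, neither your argument nor the paper's assertion bridges the step from ``$T$ is a contraction'' to ``$\delta_C$ is a contraction on $(\MSL,d_{\mathcal{M}})$''; you have correctly located the missing piece, but it remains missing.
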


In the rest of this section, we present a computational example to illustrate how dynamics on random graphs can be modeled using CAMs.
%For our computational example, we are interested in generating dynamics on a random graph using CAMs. 
We perform the following computation in Maple.
Our CAM is represented as a matrix $M$ that has a large, but finite, dimension. Each cell $x_{ij}$ of $M$ contains a Bernoulli distribution with probabilities $p_{ij}$ and $1-p_{ij}$.  The real number $p_{ij} \in [0,1]$ models the strength of the link between node $i$ and node $j$ in the associated graph. Here, we consider an undirected graph so that the strength of the edge connecting node $i$ to node $j$ is equal to the strength of the edge connecting node $j$ to node $i$. This yields a symmetric matrix $M$ with probabilities $p_{ij} = p_{ji}$.

To generate dynamics on the random graph, we iteratively apply a local rule $\delta_C$ to each element of the matrix $M$. The local rule is defined by 
 \begin{equation}
 %T(x_i) = \sum_{j\in N_r(i)} \lambda_j x_j
(\delta_C(x))_i = \sum_{j\in N_r(i)} \lambda_j x_j
\label{eqn:example1_localrule}
 \end{equation}
 (this corresponds to $T: \mathcal{M}([0,1]) \to  \mathcal{M}([0,1])$ being the identity map and no longer a contraction in Eq.~(\ref{eqn:deltaC})). 
 We use $r=1$ so that $N_r(i)$ denotes the standard Von Neumann neighborhood. %standard Manhattan distance with radius 1 centered at cell $x_i$
 %\textcolor{red}{EF:}in 2D CAs this more often called \textbf{the Von Neumann neighborhood}.
 The $\lambda_j$ are non-negative coefficients randomly generated %in Maple 
 with sum equal to 1. In particular, letting $\lambda_{0,0}$ denote the coefficient for the central cell, we use 
 \begin{equation}
 \Lambda =  \begin{bmatrix}
    0 & \lambda_{1,0} & 0\\
    \lambda_{0,-1} & \lambda_{0,0} & \lambda_{1,0}\\
    0 & \lambda_{-1,0} & 0 
    \end{bmatrix} =
    \begin{bmatrix}
    0 & 0.5211 & 0\\
    0.4247 & 0.0049 & 0.0151\\
    0 & 0.0042 & 0\\
    \end{bmatrix},
    \label{eqn:example2_lambda}
 \end{equation}
 rounded to 4 decimal places.

%To initialize $M$, and thus the probabilities of each node being connected to every other node, we generate random probabilities $p_i \in [0,1]$ in Maple. 
We model a graph with $25$ nodes and hence require a weight matrix $M$ where $M \in [0,1]^{25 \times 25}$.  We initiate the weight matrix $M$ with random (Bernoulli) probabilities $p_{ij}$ in every cell $x_{ij}$. 
%Updating the values along the border of $M$ is problematic because $M$ must be finite in order to be implemented. Hence, we consider the lattice as a torus and implement a periodic boundary condition. 
In order to update the elements along the border of $M$ (which, once again, must be finite in order to be implemented), we consider the lattice as a torus and implement a periodic boundary condition.

% fix the wording of this
%Moreover, in most traditional graphs, the weight or adjacency matrix has zero values along the diagonal such that no node $i$ is not directly connected to itself by a single edge. Here, because the same local rule must be applied at each cell in the matrix, we cannot enforce zero entries along the diagonal elements of our matrix $M$, and hence our graphs must allow any node $i$ to be directly connected to itself.

% maybe redo the example with 0 along the diagonal first and then explain that the local rule adds some things

The results of the computation are provided in Figure~\ref{fig:example2}. Figure~\ref{fig:example2} (a) depicts the initial (randomized) state of the graph, denoted $G_0$. The darkness of each edge reflects the probability of connection between the corresponding vertices: the darker the edge, the closer the associated probability $p_{ij}$ is to $1$. Figure~\ref{fig:example2} (b) depicts the state after applying the global rule once. In general, we denote by $G_n$ the graph obtained after applying the global rule $n$ times. 
%The corresponding weight matrices which model these dynamics are provided in Eq.~(\ref{eqn:example2_M0}) through Eq.~(\ref{eqn:example2_M100}).

Note that, because we instantiate every cell in $M$ with a probability $p_{ij} \in [0,1]$, our initial graph allows for nodes to be directly connected to themselves by way of a single edge. While we could enforce zeroes along the diagonal of our initial matrix $M_0$, 
and hence remove these loops in our initial graph $G_0$, 
we cannot avoid creating these edges in subsequent graphs after applying the same local rule at every element of our CAM $M$. %Because the same local rule must be applied at every cell in our CAM $M$, these initial diagonal $0$ values would be replaced with positive probabilities during the first application of the global rule. 

\begin{figure}[!t]
    \centering
    \begin{tabular}{cc}
    (a)\hspace{1mm} $G_0$ & (b)\hspace{1mm} $G_1$\\
    \includegraphics[width=0.36\textwidth]{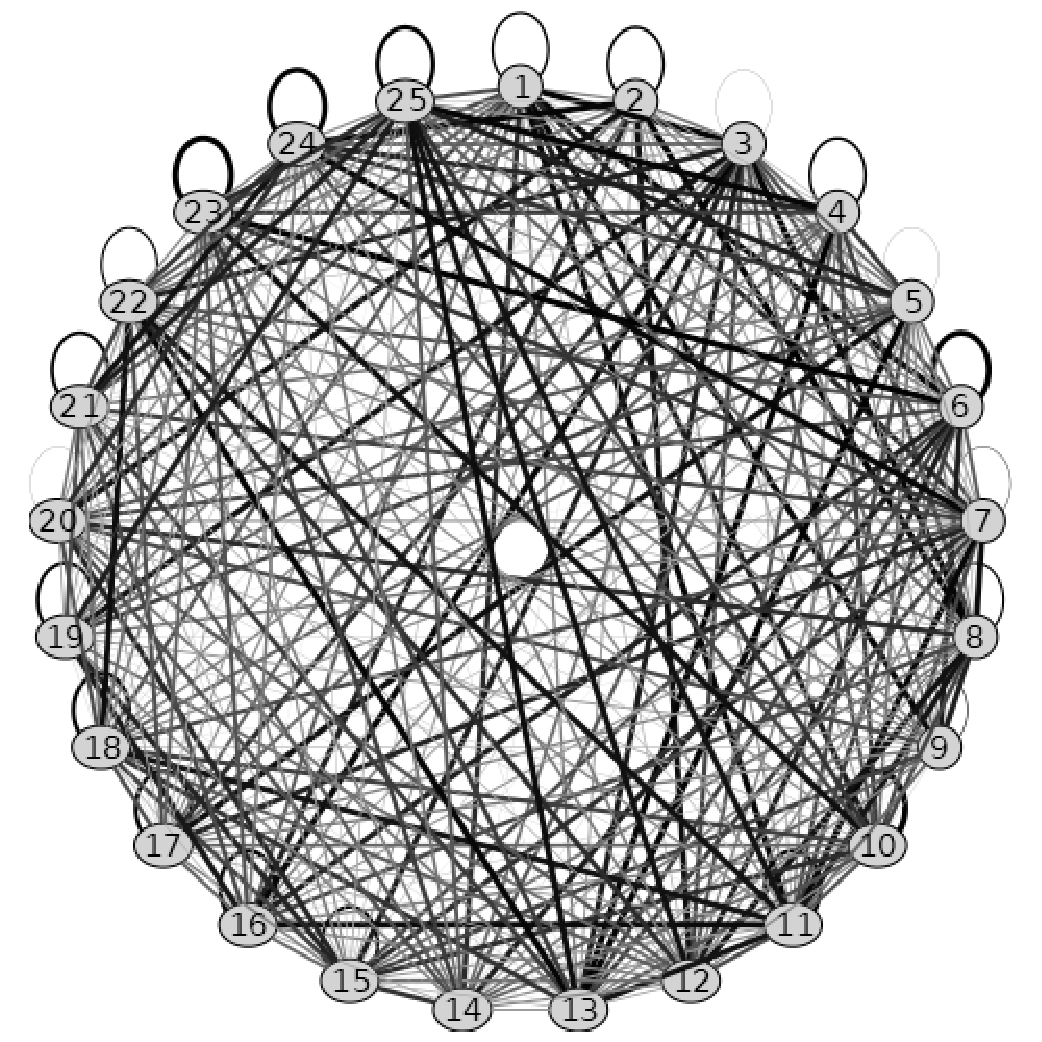} & 
    \includegraphics[width=0.36\textwidth]{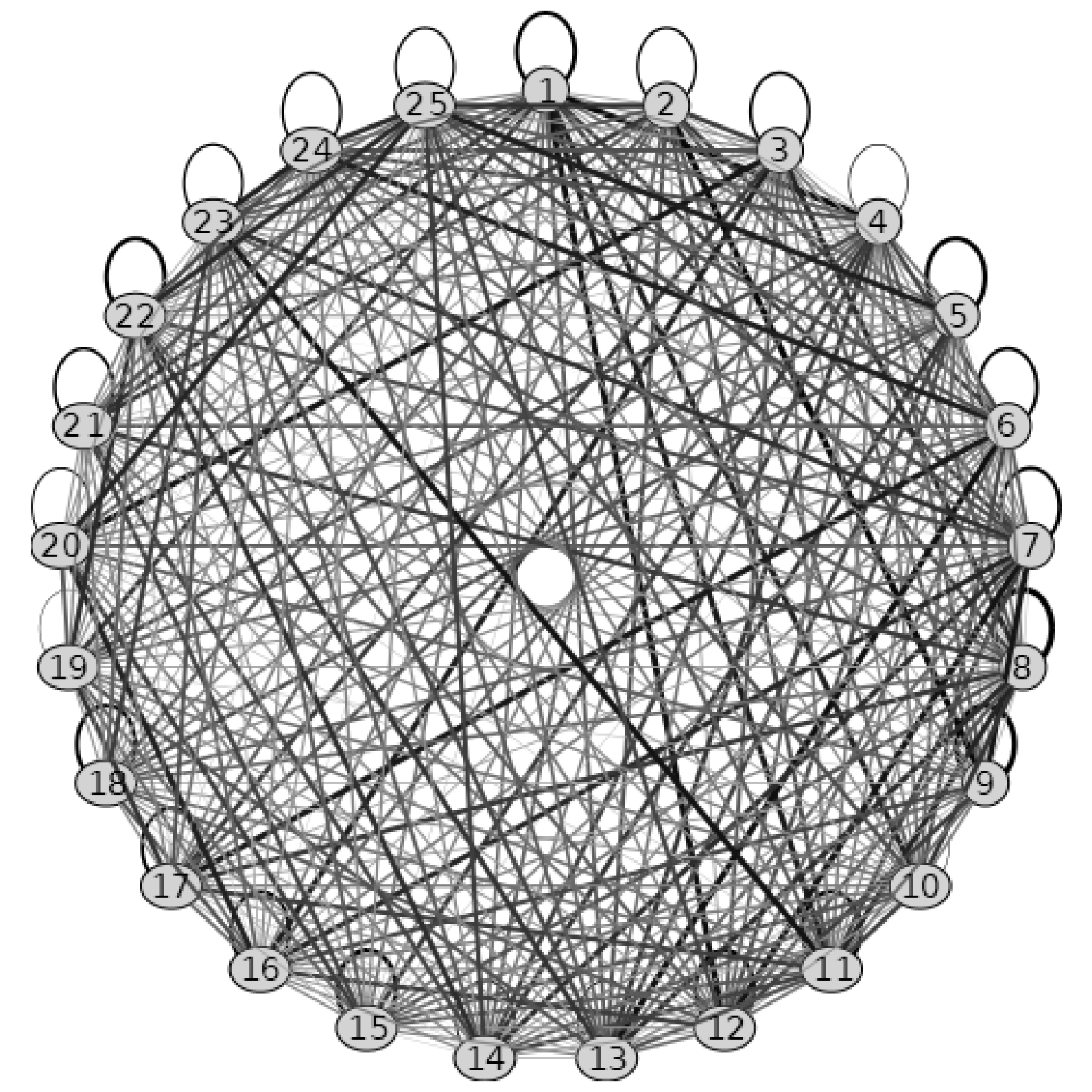}
     \\[4mm]
    (c)\hspace{1mm} $G_2$ & (d)\hspace{1mm} $G_{100}$\\
    \includegraphics[width=0.36\textwidth]{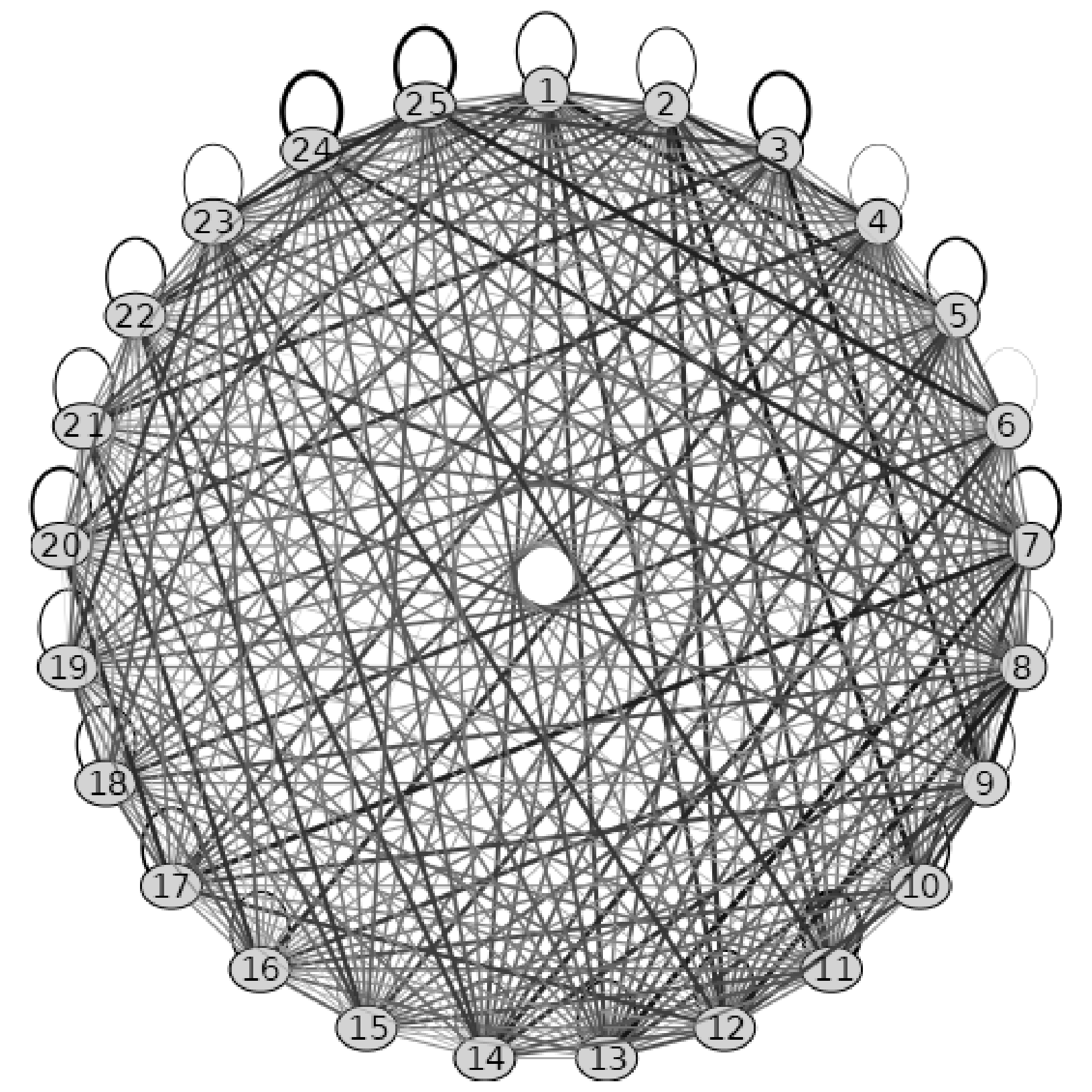} & 
    \includegraphics[width=0.36\textwidth]{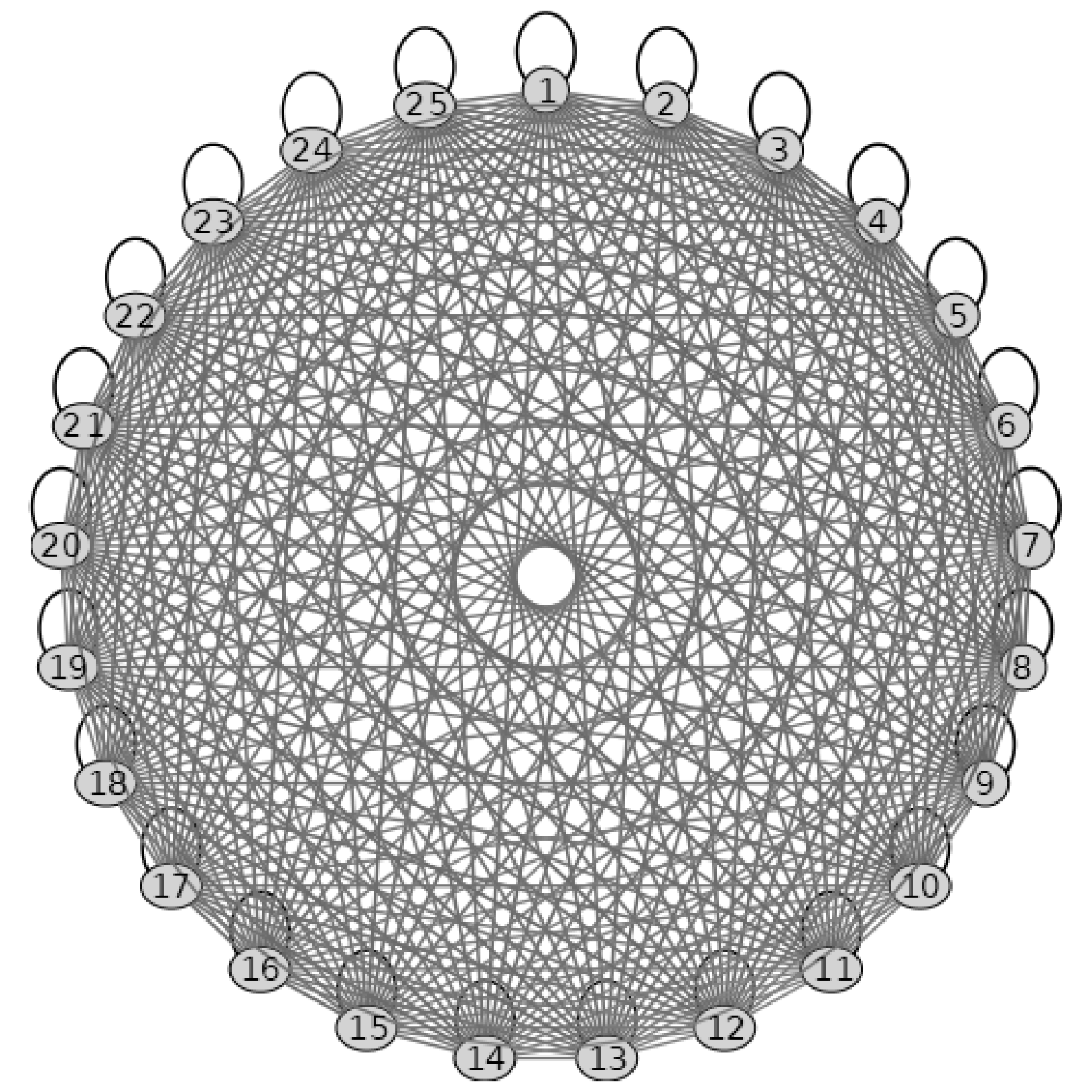}
    \end{tabular}
    \caption{Dynamics on the initial random graph in (a) obtained by applying %the local rule in 
    Eq.~(\ref{eqn:example1_localrule}).}
    \label{fig:example2}
\end{figure}

%However, our examples are constrained by the fact that graph dynamics become impossible to visualize for large matrices. We emphasize that the approach and our implementation do not limit the size of the (finite) weight matrix $M$. The restriction to a small number of vertices is made only to obtain visually decipherable graphs. 

Data from the corresponding weight matrices which model these dynamics are provided in Eq.~(\ref{eqn:example2_M0}) through Eq.~(\ref{eqn:example2_M100}).
Because the weight matrices $M\in[0,1]^{25 \times 25}$ are rather large, we provide only submatrices of the weight matrices, denoted $M_{n}^\text{sub}$ where $n$ indicates the number of applications of the global rule. %in Eq.~(\ref{eqn:example2_M0}) to Eq.~(\ref{eqn:example2_M100}). 
We consider the edges that connect the $21^\text{st}$ through $24^\text{th}$ nodes to the $10^\text{th}$ through $13^\text{th}$ nodes, as indicated by the labels. The dynamics that are visually suggested by the darkness of edges in Figure~\ref{fig:example2} are made concrete in Eq.~(\ref{eqn:example2_M0}) to Eq.~(\ref{eqn:example2_M100}). For example, the weight matrix indicates that the edge connecting the $23^\text{rd}$ and $11^\text{th}$ nodes is weak initially, but becomes very strong after the first application of the global rule.

\begin{align}
(a)\hspace{3mm} M_0^\text{sub} \hspace{3mm} &=\hspace{5mm} \bordermatrix{ & c_{21} & c_{22} & c_{23} &c_{24}\cr
                r_{10} & 0.1628 &  0.3862  & 0.9331 & 0.6953 \cr
                r_{11} & 0.4170  & 0.9254 & 0.2102 & 0.4118 \cr
                r_{12} & 0.2416 & 0.8147 & 0.4681 & 0.4712 \cr
                r_{13} & 0.7467 & 0.9784  & 0.7457 & 0.1173}
                \label{eqn:example2_M0}
                %\\[2mm]
\end{align}
\begin{align}
(b)\hspace{3mm} M_1^\text{sub} \hspace{3mm} &=\hspace{5mm} \bordermatrix{ & c_{21} & c_{22} & c_{23} &c_{24}\cr
                r_{10} & 0.3099 &  0.3116  & 0.4496 & 0.6663 \cr
                r_{11} & 0.1654  & 0.4020 & 0.9162 & 0.4758 \cr
                r_{12} & 0.5180 & 0.6072 & 0.4925 & 0.4443 \cr
                r_{13} & 0.3723 & 0.7820  & 0.6974 & 0.6015} \\[2mm]
(c)\hspace{3mm} M_2^\text{sub} \hspace{3mm} &=\hspace{5mm}  \bordermatrix{ & c_{21} & c_{22} & c_{23} &c_{24}\cr
                r_{10} & 0.4282 &  0.3683  & 0.2777 & 0.3830 \cr
                r_{11} & 0.2570  & 0.2559 & 0.4308 & 0.7773 \cr
                r_{12} & 0.2457 & 0.4587 & 0.7656 & 0.4872 \cr
                r_{13} & 0.4663 & 0.5034  & 0.6273 & 0.5616}\\[2mm]
(d)\hspace{3mm} M_{100}^\text{sub}\hspace{3mm}  &= \hspace{5mm} \bordermatrix{ & c_{21} & c_{22} & c_{23} &c_{24}\cr
                r_{10} & 0.5483 &  0.5282  & 0.5559 & 0.5296 \cr
                r_{11} & 0.5242  & 0.5522 & 0.5240 & 0.5281 \cr
                r_{12} & 0.5483 & 0.5180 & 0.5224 & 0.5185 \cr
                r_{13} & 0.5119 & 0.5165 & 0.5115 & 0.5067}
                \label{eqn:example2_M100}
\end{align}

By the $100^\text{th}$ application of the global rule, all edges in Figure~\ref{fig:example2}~(d) appear, at least visually, to be equal. The submatrix in Eq.~(\ref{eqn:example2_M100}) reveals that there are still some small differences between the probabilities, all of which hover around $0.5$. 
Performing more iterations reveals that the probabilities are in fact converging to a constant value.
Between 2500 and 3000 iterations are needed for the submatrix to have all entries equal to each other within four decimal places, with constant value $p=0.5101$.
The fact that the graph appears to settle at a constant graph may be an artifact
%of the finiteness of the weight matrix required for the implementation and our choice of pe.
of our choice of periodic boundary conditions.

\section{Conclusions}

We have extended the notion of CCA to the case of probability measures. In this new framework the state space is the set $\mathcal{M}(\setofstates)$ which is composed by all probability measures defined on \setofstates. 
This extension is motivated by the idea of introducing uncertainty in cellular automata by acting on the status of each cell rather than on the transition map. This is the main difference between our approach and those currently existing in the literature.
We have proved several results which provide conditions for the existence of a limit point and the convergence of sequences of  configurations towards them. 
Using a simple example, we illustrate how CAMs may be used to model dynamics on random graphs and the link to symbolic dynamics.

{\color{black}
%Future work will include investigating the mathematical properties of CAMs. 
Here we have presented only a first introduction to our setting in which we have omitted many interesting results and extensions.
Indeed, future work will present further results relating to convergence, equicontinuity and stability. In future analysis we will also extend the ideas presented here by employing alternative metrics on the space $\MSL$. 
Moreover, many of the results presented above can also be extended to the case in which the coefficients $\lambda_j$ in the convex combination map are space-dependent, and this extension will also be pursued in future work.
Another extension omitted here is to consider CAMs in which we employ a fractal-type map $T$ in the local rule. 
}
%Future works will include the extension of probability measures to imprecise probabilities and set-valued probabilities. 
Our novel setting of probability measures will also be extended to the more general setting of imprecise probabilities and set-valued probabilities.
We will also consider a generalized definition of CA in which the transition map is set-valued \cite{kunze2007contractive,la2009measure}.

Extensions arising from our choice of application are equally numerous.
{\color{black}
In particular, as illustrated by our example presented above, we believe that CAMs present a novel way to stimulate research using random graphs. 
A particular example, which we will pursue in future work, can be found in neural network thoery. 
An ideal network is characterized by a single output node and infinitely many input nodes. A single edge connects each of the (infinitely many) input nodes to the output node.
Associated with each edge is a probability measure which describes the weight intensity.
A particular configuration of this neural network can therefore be described by means of a configuration of a one-dimensional cellular automata. The training phase in which weights are updated corresponds to a specific dynamic on the corresponding cellular automata model. This setting can be easily extended to multi-layer neural networks which are modeled by means of higher dimensional cellular automata on probability measures.
}
%For example, we may extend our application to random graphs towards modeling real-world applications, e.g., the training of a neural network where the weight matrix represents the strength of connection between nodes in the network.

Another future avenue includes the analysis of inverse problems for cellular automata on probability measures. Very briefly, given a configuration in \MSL an inverse problem consists of determining a cellular automaton that has this configuration as a global attracting point.

\begin{credits}
%\subsubsection{\ackname} A bold run-in heading in small font size at the end of the paper is used for general acknowledgments, for example: This study was funded by X (grant number Y).
\subsubsection{Acknowledgements} This study was partially supported by
the HORIZON-MSCA-2022-SE-01 project 101131549 ``Application-driven Challenges for Automata Networks and Complex Systems (ACANCOS)'' and the ANR project ANR-24-CE48-0335-01 ``Ordinal Time Computations'' (OTC). Morevoer, we acknowledge the support from OTC project
of RISE Academy of Universit\'{e} C\^{o}te d'Azur.
\subsubsection{\discintname}
The authors have no competing interests to declare that are relevant to the content of this article. 
\end{credits}
%
% ---- Bibliography ----
%
% BibTeX users should specify bibliography style 'splncs04'.
% References will then be sorted and formatted in the correct style.
%
 \bibliographystyle{splncs04}
 \bibliography{mybibliography}

\begin{thebibliography}{10}
\providecommand{\url}[1]{\texttt{#1}}
\providecommand{\urlprefix}{URL }
\providecommand{\doi}[1]{https://doi.org/#1}

\bibitem{arrighi2019}
Arrighi, P.: An overview of quantum cellular automata. Natural Computing
  \textbf{18}(4),  885--899 (2019)

\bibitem{barnsley2014fractals}
Barnsley, M.F.: Fractals everywhere. Academic press (2014)

\bibitem{CATTANEO1997105}
Cattaneo, G., Flocchini, P., Mauri, G., Vogliotti, C.Q., Santoro, N.: Cellular
  automata in fuzzy backgrounds. Physica D: Nonlinear Phenomena
  \textbf{105}(1),  105--120 (1997)

\bibitem{Durrett1992}
Durrett, R.: Probability metrics and the stability of stochastic models. SIAM
  Review  \textbf{34}(1),  147--147 (1992)

\bibitem{FormentiIMY14}
Formenti, E., Imai, K., Martin, B., Yun{\`{e}}s, J.: Advances on random
  sequence generation by uniform cellular automata. In: Calude, C.S.,
  Freivalds, R., Iwama, K. (eds.) Computing with New Resources - Essays
  Dedicated to Jozef Gruska on the Occasion of His 80th Birthday. Lecture Notes
  in Computer Science, vol.~8808, pp. 56--70. Springer (2014)

\bibitem{GERAKAKIS2019125}
Gerakakis, I., Gavriilidis, P., Dourvas, N.I., Georgoudas, I.G., Trunfio, G.A.,
  Sirakoulis, G.C.: Accelerating fuzzy cellular automata for modeling crowd
  dynamics. Journal of Computational Science  \textbf{32},  125--140 (2019)

\bibitem{hedlund69}
Hedlund, G.A.: Endomorphisms and automorphisms of the shift dynamical system.
  Mathematical systems theory  \textbf{3}(4),  320--375 (1969)

\bibitem{hutchinson1981fractals}
Hutchinson, J.E.: Fractals and self similarity. Indiana University Mathematics
  Journal  \textbf{30}(5),  713--747 (1981)

\bibitem{kelley1975}
Kelley, J.L.: General Topology. Graduate Texts in Mathematics, Springer (1975)

\bibitem{kunze2011}
Kunze, H.E., La~Torre, D., Mendivil, F., Vrscay, E.R.: Fractal-based methods in
  analysis. Springer Science \& Business Media (2011)

\bibitem{kunze2007contractive}
Kunze, H.E., La~Torre, D., Vrscay, E.R.: Contractive multifunctions, fixed
  point inclusions and iterated multifunction systems. Journal of Mathematical
  Analysis and Applications  \textbf{330}(1),  159--173 (2007)

\bibitem{Mendivil2015}
La~Torre, D., Mendivil, F.: The {M}onge–{K}antorovich metric on multimeasures
  and self–similar multimeasures. Set-Valued and Variational Analysis
  \textbf{23} (12 2015)

\bibitem{la2009measure}
La~Torre, D., Vrscay, E.R., Ebrahimi, M., Barnsley, M.F.: Measure-valued
  images, associated fractal transforms, and the affine self-similarity of
  images. SIAM Journal on Imaging Sciences  \textbf{2}(2),  470--507 (2009)

\bibitem{MariotGFL20}
Mariot, L., Gadouleau, M., Formenti, E., Leporati, A.: Mutually orthogonal
  latin squares based on cellular automata. Des. Codes Cryptogr.
  \textbf{88}(2),  391--411 (2020)

\bibitem{PicekM0JM17}
Picek, S., Mariot, L., Yang, B., Jakobovic, D., Mentens, N.: Design of s-boxes
  defined with cellular automata rules. In: Proceedings of the Computing
  Frontiers Conference, CF'17, Siena, Italy, May 15-17, 2017. pp. 409--414.
  {ACM} (2017)

\bibitem{Pivato12}
Pivato, M.: The ergodic theory of cellular automata. Int. J. Gen. Syst.
  \textbf{41}(6),  583--594 (2012)

\bibitem{sagar2021compactness}
Sagar, G., Ravi, D.: Compactness of any countable product of compact metric
  spaces in product topology without using {T}ychonoff's theorem. arXiv
  preprint arXiv:2111.02904  (2021)

\bibitem{schauder1930fixpunktsatz}
Schauder, J.: Der fixpunktsatz in funktionalra{\"u}men. Studia Mathematica
  \textbf{2}(1),  171--180 (1930)

\bibitem{SIPAHI2018151}
Sipahi, R., Zupanc, G.K.: Stochastic cellular automata model of neurosphere
  growth: Roles of proliferative potential, contact inhibition, cell death, and
  phagocytosis. Journal of Theoretical Biology  \textbf{445},  151--165 (2018)

\bibitem{vallender1974}
Vallender, S.S.: Calculation of the {W}asserstein distance between probability
  distributions on the line. Theory of Probability \& Its Applications
  \textbf{18}(4),  784--786 (1974)

\bibitem{VANSCOY201758}
{Van Scoy}, G.K., George, E.L., {Opoku Asantewaa}, F., Kerns, L., Saunders,
  M.M., Prieto-Langarica, A.: A cellular automata model of bone formation.
  Mathematical Biosciences  \textbf{286},  58--64 (2017)

\bibitem{zeidler2012applied}
Zeidler, E.: Applied functional analysis: applications to mathematical physics,
  vol.~108. Springer Science \& Business Media (2012)

\bibitem{ZHAO2021126049}
Zhao, R., Zhai, Y., Qu, L., Wang, R., Huang, Y., Dong, Q.: A continuous floor
  field cellular automata model with interaction area for crowd evacuation.
  Physica A: Statistical Mechanics and its Applications  \textbf{575},  126049
  (2021)

\end{thebibliography}

\end{document}